\newtheorem{Twierdzenie}{Theorem}[section]
\newtheorem{Lemat}{Lemma}[section]
\newtheorem{Uwaga}{Remark}[section]
\title{On a certain class of para-Hermite Einstein spaces.}
\author{$\textrm{Adam Chudecki}^{a,b,c}$}
\begin{document}
\sloppy

\maketitle

$^a$ Center of Mathematics and Physics, Lodz University of Technology, Al. Politechniki 11, 90-924 Lodz, Poland. 
\newline
$^b$ Institute of Physics, Faculty of Technical Physics, Information Technology and Applied Mathematics, Lodz University of Technology, Wólcza\'nska 217/221, 93-005 Lodz, Poland.
\newline
$^c$ adam.chudecki@p.lodz.pl
\newline
\newline
\textbf{Abstract}. 
A special class of (complex) para-Hermite Einstein spaces is analyzed. It is well-known that the self-dual Weyl tensor in para-Hermite Einstein spaces is of the Petrov-Penrose type [D]. In what follows we assume that the anti-self-dual Weyl tensor is algebraically degenerate. It is equivalent to the existence of an anti-self-dual congruence of null strings which is assumed not to be parallely propagated. Hence, spaces analyzed here are not Walker spaces. A classification of such spaces is given and the explicit metrics are found.
\newline
\newline
\textbf{PACS numbers:} 04.20.Jb, 04.20.Gz.
\newline
\newline
\textbf{Key words:} para-Hermite Einstein spaces, congruences of null strings.

%#####################################################################################

\section{Introduction}
\setcounter{equation}{0}

The paper is devoted to a special class of para-Hermite Einstein (pHE) spaces. It is a continuation of the paper \cite{Chudecki_Ref_3}. It is also a part of more extensive research programme devoted to algebraically degenerate para-K\"ahler Einstein spaces (pKE-spaces) and para-Hermite Einstein spaces (pHE-spaces).

PKE-spaces and pHE-spaces are equipped with a pair of congruences of null strings of the same duality\footnote{\textsl{A congruence of self-dual (SD) null strings} is a family of totally null, SD, 2-dimensional surfaces. In Penrose terminology such structures are called $\alpha$-surfaces. An anti-self-dual (ASD) congruence of null strings is defined analogously and it corresponds to Penrose $\beta$-surfaces. In this text a congruence of null strings will be abbreviated by the symbol $\mathcal{C}$.}. If an orientation is chosen in such a manner that these $\mathcal{C}s$ are SD then the SD Weyl tensor is type-[D]. The difference between pKE-spaces and pHE-spaces lies in properties of these $\mathcal{C}s$. In pKE-spaces both $\mathcal{C}s$ are parallely propagated while in pHE-spaces they are not parallely propagated. Hence, pKE-spaces are spaces of the types\footnote{See Ref. \cite{Chudecki_Ref_1,Chudecki_Ref_2, Chudecki_Ref_3} for an explanation of a formalism, symbols and abbreviations.} $[\textrm{D}]^{nn} \otimes [\textrm{any}]$ while pHE-spaces are spaces of the types $[\textrm{D}]^{ee} \otimes [\textrm{any}]$. Despite this difference, both pKE-spaces and pHE-spaces belong to a more general class of \textsl{hyperheavenly spaces ($\mathcal{HH}$-spaces)} \cite{Plebanski_Robinson_2}.

$\mathcal{HH}$-spaces are complex 4-dimensional Einstein spaces such that the SD Weyl tensor is algebraically degenerate. They have amazing property: vacuum Einstein field equations with cosmological constant in $\mathcal{HH}$-spaces were reduced to a single equation which is called \textsl{the hyperheavenly equation ($\mathcal{HH}$-equation)}. A solution of the $\mathcal{HH}$-equation gives a potential (which is called \textsl{the key function}) which completely determines the metric. This outstanding fact justifies why the $\mathcal{HH}$-spaces theory is a brilliant tool in various problems of 4D complex and real Einstein geometries. $\mathcal{HH}$-spaces are especially  useful in 4D real neutral geometries because real neutral slice of a complex $\mathcal{HH}$-metric can be easily obtained. The results published in \cite{Chudecki_Ref_1,Chudecki_Ref_2,Chudecki_Ref_3} proved that an approach to pKE-spaces and pHE-spaces via $\mathcal{HH}$-spaces was a "bull's eye".

PKE-spaces and pHE-spaces with algebraically general ASD Weyl tensor are problematic. According to our best knowledge, explicit examples of such spaces have not been found yet. The issue changes if one assumes an algebraic degeneracy of the ASD Weyl tensor. In such a case pKE-spaces were completely solved in all generality \cite{Chudecki_Ref_1,Chudecki_Ref_2}. PHE-spaces are more complicated but a significant progress in the subject has been done in \cite{Chudecki_Ref_3} where all algebraically degenerate pHE-spaces equipped with a parallely propagated ASD $\mathcal{C}$ have been found (a congruence of null strings which is parallely propagated is called \textsl{nonexpanding}; spaces which are equipped with nonexpanding $\mathcal{C}$ are called \textsl{Walker spaces}). Hence, all Einstein metrics of the types $[\textrm{D}]^{ee} \otimes [\textrm{deg}]^{n}$ are known. The current paper is devoted to pHE-spaces for which an ASD $\mathcal{C}$ is not parallely propagated (i.e., it is \textsl{expanding}). In other words, we deal here with spaces of the types $[\textrm{D}]^{ee} \otimes [\textrm{deg}]^{e}$ which are not Walker anymore.

To be more precise, we focus our attention on a special class of type-$[\textrm{D}]^{ee} \otimes [\textrm{deg}]^{e}$ pHE-spaces. It is known that SD and ASD $\mathcal{C}s$ intersect and such an intersection constitutes \textsl{a congruence of null geodesics}\footnote{A congruence of null geodesics will be abbreviated by the symbol $\mathcal{I}$.}, characterized by \textsl{an expansion} and \textsl{a twist}. All algebraically degenerate pHE-spaces are equipped with at least two $\mathcal{I}s$. If the ASD Weyl tensor is type-[D] then there are four $\mathcal{I}s$. In what follows we assume that a twist of an arbitrarily chosen $\mathcal{I}$ vanishes and we will prove that this assumption together with vacuum Einstein field equations imply (perhaps surprisingly) that all other $\mathcal{I}s$ are nontwisting. Hence, the ASD Weyl tensor is restricted to the types [II] or [D]. Consequently, the current paper is devoted to spaces of the types $[\textrm{D}]^{ee} \otimes [\textrm{II}]^{e}$ and $[\textrm{D}]^{ee} \otimes [\textrm{D}]^{ee}$ with all $\mathcal{I}s$ being nontwisting.

Our paper is organized, as follows. In Section \ref{section_HH_spaces} a brief introduction to type-[D] $\mathcal{HH}$-spaces is presented. The general formulas for a metric, conformal curvature, gauge freedom and symmetries are given and they are specialized to the twist-free types $[\textrm{D}]^{ee} \otimes [\textrm{deg}]^{e}$. It is shown that the only possible types are $[\textrm{D}]^{ee} \otimes [\textrm{II}]^{e}$ and $[\textrm{D}]^{ee} \otimes [\textrm{D}]^{ee}$. Sections \ref{section_Dee_x_IIe} and \ref{section_Dee_x_Dee} are devoted to general solutions and to solutions with additional symmetries. 

We will close this Section with an important remark. The current paper is thought as a continuation of \cite{Chudecki_Ref_3}. Thus, Section \ref{section_HH_spaces} devoted to preliminary remarks has been shortened as much as possible. We hope that the Reader of this paper will be also interested in results presented in \cite{Chudecki_Ref_3} as well as in \cite{Chudecki_Ref_1,Chudecki_Ref_2}. These references contain a comprehensive introduction to the spinorial formalism in Infeld - Van der Waerden - Pleba\'nski notation, hyperheavenly spaces, congruences of null strings, congruences of null geodesics and - what is especially important - abbreviations and symbols which we use. It is warmly suggested to the Reader of this paper to treat Ref. \cite{Chudecki_Ref_3} as a mandatory textbook to the subject.

All considerations are local. Functions are holomorphic and coordinates are complex. Hence, the metrics (\ref{metryka_TypII_pm_pm_ostateczna}), (\ref{metryka_TypII_pm_mm_ostateczna}), (\ref{metryka_TypD_ostateczna_pmpmpmpm}) and (\ref{metryka_TypD_ostateczna_pmmmmmpm}) are in general complex holomorphic. All these metrics have real neutral slices. A procedure of obtaining such slices is more then straightforward: the coordinates should be treated as real ones and the functions as smooth ones.

%#####################################################################################

\renewcommand{\arraystretch}{1.5}
\setlength\arraycolsep{2pt}
\setcounter{equation}{0}

\section{Hyperheavenly spaces}
\label{section_HH_spaces}
\setcounter{equation}{0}

\subsection{Hyperheavenly spaces of the types $[\textrm{D}]^{ee} \otimes [\textrm{any}]$}

\subsubsection{The metric}
\label{subsubsekcja_metric}

The metric of an $\mathcal{HH}$-space of the type $[\textrm{D}]^{ee} \otimes [\textrm{any}]$ has the form
\begin{equation}
\label{metryka_HH_ekspandujaca_D_any}
\frac{1}{2} ds^{2} = x^{-2} \left( dqdy-dpdx + \mathcal{A} \, dp^{2} - 2 \mathcal{Q} \, dpdq + \mathcal{B} \, dq^{2} \right)  = e^{1}e^{2} + e^{3}e^{4}
\end{equation}
where $(q,p,x,y)$ are local coordinates called \textsl{the hyperheavenly coordinates}. A null tetrad $(e^{1},e^{2},e^{3},e^{4})$ is called \textsl{Pleba\'nski tetrad} and it reads
\begin{equation}
\label{tetrada_Plebanskiego__HH_ekspandujaca}
e^{3} = x^{-2} dp, \ e^{1} = -x^{-2}  dq, \ e^{4} = -dx + \mathcal{A} \, dp - \mathcal{Q} \, dq,\ e^{2} = -dy + \mathcal{Q} \, dp - \mathcal{B} \, dq
\end{equation}
Functions $\mathcal{A} = \mathcal{A} (q,x,y)$, $\mathcal{Q} = \mathcal{Q} (q,x,y)$ and $\mathcal{B} = \mathcal{B} (q,x,y)$ stand for
\begin{subequations}
\label{zwiazek_miedzy_Qab_i_W}
\begin{eqnarray}
\label{zwiazek_miedzy_w_i_A}
\mathcal{A} &:=& -x W_{yy} + \mu_{0} x^{3} + \frac{\Lambda}{6}
\\
\label{zwiazek_miedzy_w_i_Q}
\mathcal{Q} &:=& xW_{xy}-W_{y}
\\ 
\label{zwiazek_miedzy_w_i_B}
\mathcal{B} &:=& -xW_{xx} +2W_{x}
\end{eqnarray}
\end{subequations}
where $\mu_{0} =1$ and $\Lambda$ is the cosmological constant\footnote{The constant $\mu_{0}$ can be brought to an arbitrary value, in particular but not necessarily, to $1$. Thus, in the rest of the paper we will be using the symbol $\mu_{0}$.}. \textsl{The key function} $W=W(q,x,y)$ is a holomorphic function which satisfies \textsl{the expanding hyperheavenly equation}
\begin{eqnarray}
\label{HH_equation_ogolne}
W_{xx}W_{yy} - W_{xy}^{2} + \frac{2}{x} (W_{y}W_{xy} - W_{x}W_{yy}) +\frac{1}{x} W_{qy} &&
\\ \nonumber
 - \mu_{0} (x^{2} W_{xx} - 3x W_{x} + 3W) - \frac{\Lambda}{6x} W_{xx}&=&0
\end{eqnarray}
The conformal curvature is given by the SD and ASD conformal curvature coefficients. The only nonzero SD conformal curvature coefficient read
\begin{equation}
C^{(3)} = -2 \mu_{0} x^{3}
\end{equation}
and for ASD ones we find
\begin{eqnarray}
\label{ASD_conformal_curvature}
\frac{1}{2x^{3}} \dot{C}^{(5)} &=&  \partial_{x}^{4} \left( W - \frac{\mu_{0}}{4} x^{2} y^{2} \right)
\\ \nonumber
\frac{1}{2x^{3}} \dot{C}^{(4)} &=&  \partial_{x}^{3}\partial_{y} \left( W - \frac{\mu_{0}}{4} x^{2} y^{2} \right)
\\ \nonumber
\frac{1}{2x^{3}} \dot{C}^{(3)} &=&  \partial_{x}^{2}\partial_{y}^{2} \left( W - \frac{\mu_{0}}{4} x^{2} y^{2} \right)
\\ \nonumber
\frac{1}{2x^{3}} \dot{C}^{(2)} &=&  \partial_{x}\partial_{y}^{3} \left( W - \frac{\mu_{0}}{4} x^{2} y^{2} \right)
\\ \nonumber
\frac{1}{2x^{3}} \dot{C}^{(1)} &=& \partial_{y}^{4} \left( W - \frac{\mu_{0}}{4} x^{2} y^{2} \right)
\end{eqnarray}

\subsubsection{Gauge freedom}

Transformations of the hyperheavenly coordinates which leave the metric (\ref{metryka_HH_ekspandujaca_D_any}) invariant read
\begin{equation}
\label{gauge}
q'=q'(q), \ p'= p + h(q), \ x'= x, \ y' = \frac{1}{f} ( y +  h_{q} x) + \sigma (q)
\end{equation}
where $h=h(q)$, $\sigma=\sigma(q)$ and $f=f(q) := \dfrac{dq'}{dq}$ are arbitrary gauge functions. Under (\ref{gauge}) the key function $W$ transforms as follows
\begin{eqnarray}
\label{gauge_for_key_function}
f^{2}  W' &=& W + \frac{1}{2} \mu_{0} h_{q} \left( x^{3}y + \frac{1}{2}  h_{q} x^{4} \right) - \frac{1}{3} L \, x^{3} + \frac{f_{q}}{2f} \, xy
\\ \nonumber
&&   - \frac{1}{2}   f \frac{\partial}{\partial q} \left( \frac{h_{q}}{f} \right)   x^{2}
   -\frac{\Lambda}{6}  h_{q} \,  y - \left( \frac{1}{2}  f \sigma_{q} + \frac{\Lambda}{12} h_{q}^{2}  \right)   x - M
\end{eqnarray}
where $L=L(q)$ and $M=M(q)$ are arbitrary gauge functions such that
\begin{equation}
\label{zwiazek_miedzy_L_i_M}
3\mu_{0} M - f^{\frac{1}{2}} \partial_{q}^{2} (f^{-\frac{1}{2}}) + \frac{\Lambda}{3} L = 0
\end{equation}

\subsubsection{Congruences of SD null strings}

$\mathcal{HH}$-spaces of the types $[\textrm{D}]^{ee} \otimes [\textrm{any}]$ are equipped with two complementary congruences of null strings. We denote these congruences by $\mathcal{C}_{m^{A}}$ and $\mathcal{C}_{n^{A}}$. Spinors which generate these $\mathcal{C}s$ and their expansion read
\begin{subequations}
\label{ekspansja_CmA_and_CnA}
\begin{eqnarray}
\label{ekspansja_CmA}
\mathcal{C}_{m^{A}}: && m_{A}=[0,m], m \ne 0; \ \  M_{\dot{A}} = -\sqrt{2} \, \frac{m}{x} \, [1,0]
\\ 
\label{ekspansja_CnA}
\mathcal{C}_{n^{A}}: && n_{A}=[n,0], n \ne 0; \ \  N_{\dot{A}} = \sqrt{2} \, nx \, [-\mathcal{Q}, \mathcal{A}]
\end{eqnarray}
\end{subequations}
Because $M_{\dot{A}} \ne 0$ and $N_{\dot{A}} \ne 0$ hold true both $\mathcal{C}_{m^{A}}$ and $\mathcal{C}_{n^{A}}$ are expanding and a degeneration to a nonexpanding case is not admitted.

\subsection{Hyperheavenly spaces of the types $[\textrm{D}]^{ee} \otimes [\textrm{deg}]^{e}$}

\subsubsection{Congruence $\mathcal{C}_{m^{\dot{A}}}$}
\label{ASD_congruence_section}

An algebraic degeneracy of the ASD Weyl tensor is equivalent to the existence of an ASD $\mathcal{C}$. We denote this congruence by $\mathcal{C}_{m^{\dot{A}}}$ (i.e., it is generated by a dotted spinor $m^{\dot{A}}$) and its expansion we denote by $M_{A}$. The spinor $m^{\dot{A}}$ can be always scaled to the form  $m_{\dot{A}} \sim [z,1]$, $z=z(q,x,y)$. The ASD null string equations read
\begin{subequations}
\label{kongruencja_mdotA_rownania}
\begin{eqnarray}
\label{kongruencja_mdotA_rownania_1}
&& z_{x} - zz_{y}=0
\\ 
\label{kongruencja_mdotA_rownania_2}
&& z_{q}  - z_{y} \mathcal{Z} + z \frac{\partial \mathcal{Z}}{\partial y} - \frac{\partial \mathcal{Z}}{\partial x} =0 , \ \mathcal{Z} := \mathcal{B} + 2z \mathcal{Q} + z^{2} \mathcal{A}
\end{eqnarray}
\end{subequations}
and the expansion $M_{A}$ has the form
\begin{subequations}
\label{ekspansja_pierwszej_ASD_struny}
\begin{eqnarray}
\label{ekspansja_pierwszej_ASD_struny_1}
\frac{x}{\sqrt{2}} M_{1} &=& -x z_{y}  - 1
\\ 
\label{ekspansja_pierwszej_ASD_struny_2}
\frac{1}{\sqrt{2}x}  M_{2} &=&   x z \frac{\partial}{\partial y} (\mathcal{Q} +z \mathcal{A}) - x \frac{\partial}{\partial x} (\mathcal{Q} +z \mathcal{A}) + (1 - x z_{y}) (\mathcal{Q} +z \mathcal{A})   \ \ \ \ \ \ \ 
\end{eqnarray}
\end{subequations}
The transformation formula for $z$ reads
\begin{equation}
\label{transformacja_na_z_ina_w}
f z' = z - h_{q}
\end{equation}

\subsubsection{Congruences of null geodesics}
\label{ASD_congruence_null_geodesics_section}

An $\mathcal{HH}$-space of the type $[\textrm{D}]^{ee} \otimes [\textrm{deg}]^{e}$ is automatically equipped with two different congruences of null geodesics. We denote them by $\mathcal{I}_{1} = \mathcal{I} (\mathcal{C}_{m^{A}}, \mathcal{C}_{m^{\dot{A}}})$ and $\mathcal{I}_{3} = \mathcal{I} (\mathcal{C}_{n^{A}}, \mathcal{C}_{m^{\dot{A}}})$ (see Figure \ref{Congruences}). Expansions and twists of these $\mathcal{I}s$ read
\begin{eqnarray}
\label{wlasnosci_przeciec}
&& \theta_{1} \sim xz_{y} +2, \ \varrho_{1} \sim z_{y}
\\ \nonumber
&& \theta_{3} \sim -n M_{2} - zN_{\dot{2}} + N_{\dot{1}}, \ \varrho_{3} \sim -n M_{2} + zN_{\dot{2}} - N_{\dot{1}}
\end{eqnarray}

\begin{figure}[ht]
\begin{center}
\includegraphics[scale=0.9]{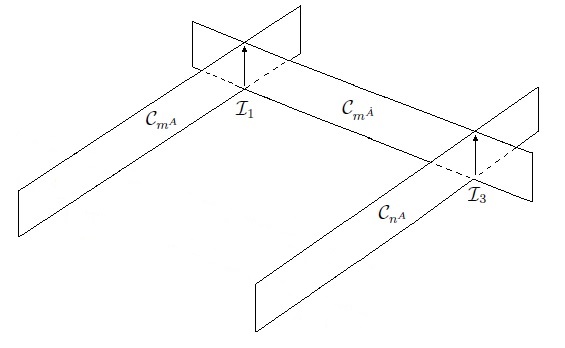}
\caption {Congruences of null strings and congruences of null geodesics in algebraically degenerate para-Hermite Einstein spaces.}
\label{Congruences}
\end{center}
\end{figure}

\subsection{Hyperheavenly spaces of the types $\{  [\textrm{D}]^{ee} \otimes [\textrm{deg}]^{e} ,[+-,??]  \}$}

In what follows we assume that one of the twists vanishes. It is arbitrary which twist we put zero but it is much more convenient to put
\begin{equation}
\label{twist_varrho1_zero}
\varrho_{1} = 0
\end{equation}
\begin{Uwaga}
\normalfont
\label{Uwaga_o_calkowalnosci}
Note, that vanishing of the twist $\varrho_{1}$ of $\mathcal{I}_{1}$ has a deep geometrical interpretation. Leafs of the congruence $\mathcal{C}_{m^{A}}$ are integral manifolds of a 2D-distribution $\{ m_{A} m_{\dot{B}}, m_{A} x_{\dot{B}} \}$, $m_{\dot{B}} x^{\dot{B}} \ne 0$. Analogously, leafs of the congruence $\mathcal{C}_{m^{\dot{A}}}$ are integral manifolds of a 2D-distribution $\{ m_{A} m_{\dot{B}}, y_{A} m_{\dot{B}} \}$, $m_{A} y^{A} \ne 0$. If we construct a 3D-distribution spanned by the vectors $\{ m_{A} m_{\dot{B}}, m_{A} x_{\dot{B}}, y_{A} m_{\dot{B}} \}$ then the condition (\ref{twist_varrho1_zero}) is equivalent to the fact that the 3D-distribution is integrable.
\end{Uwaga}
Eqs. (\ref{twist_varrho1_zero}) and (\ref{kongruencja_mdotA_rownania_1}) imply that $z_{x}=z_{y}=0 \ \Longrightarrow \ z=z(q)$. But if it is so, then from (\ref{transformacja_na_z_ina_w}) we find that $z$ can be gauged away without any loss of generality. Hence, in the rest of the paper we put $z=0$.
\begin{Uwaga}
\normalfont
The choice $z=0$ leaves us with the gauge (\ref{gauge}) restricted to the condition $h=h_{0} =\textrm{const}$.
\end{Uwaga}

With $z=0$ formulas simplify considerably. Indeed, expansions of $\mathcal{C}_{m^{\dot{A}}}$ take the form
\begin{subequations}
\label{ekspansja_pierwszej_ASD_struny_po_zmianach}
\begin{eqnarray}
\label{ekspansja_pierwszej_ASD_struny_1_po_zmianach}
\frac{x}{\sqrt{2}} M_{1} &=&   - 1
\\ 
\label{ekspansja_pierwszej_ASD_struny_2_po_zmianach}
\frac{1}{\sqrt{2}x}  M_{2} &=&    \mathcal{Q}   - x \mathcal{Q}_{x}
\end{eqnarray}
\end{subequations}
Because $M_{1} \ne 0$, the congruence $\mathcal{C}_{m^{\dot{A}}}$ is always expanding. Optical parameters for $\mathcal{I}s$ read
\begin{subequations}
\begin{eqnarray}
\label{wlasnosci_przeciec_po_zmianach}
&& \theta_{1} \sim 2, \ \varrho_{1}=0
\\ 
\label{wlasnosci_przeciec_po_zmianach_I3}
&& \theta_{3} \sim 2 \mathcal{Q} - x \mathcal{Q}_{x}, \ \varrho_{3} \sim \mathcal{Q}_{x}
\end{eqnarray}
\end{subequations}
Thus, $\mathcal{I}_{1}$ is always expanding and nontwisting, $p(\mathcal{I}_{1})=[+-]$. Properties of $\mathcal{I}_{3}$ 
depend on $\mathcal{Q}$ and they will be determined when we find the form of the key function. From the SD null string equation (\ref{kongruencja_mdotA_rownania_2}) with $z=0$ we find that $\mathcal{B}_{x}=0$. Using (\ref{zwiazek_miedzy_w_i_B}) we arrive at the key function in the form
\begin{equation}
\label{funkcja_kluczowa_postac_wyjsciowa}
W= B(q,y) \, x^{3} + A(q,y) \, x + C(q,y)
\end{equation}
Under (\ref{gauge}) the functions $A$, $B$ and $C$ transform as follows
\begin{subequations}
\begin{eqnarray}
\label{transformacja_A}
f^{2} B' &=& B - \frac{1}{3}L
\\ 
\label{transformacja_B}
f^{2} A' &=& A + \frac{f_{q}}{2f} y - \frac{1}{2} f \sigma_{q}
\\ 
\label{transformacja_C}
f^{2} C' &=& C-M
\end{eqnarray}
\end{subequations}
Feeding the $\mathcal{HH}$-equation (\ref{HH_equation_ogolne}) with (\ref{funkcja_kluczowa_postac_wyjsciowa}) one finds that $\mathcal{HH}$-equation becomes a 4th order polynomial in $x$. From the coefficient standing at $x^4$ one finds that $B_{y}=0$. Hence, $B=B(q)$ and from (\ref{transformacja_A}) it follows that $B$ can be gauged away without any loss of generality. 
\begin{Uwaga}
\normalfont
The choice $B=0$ leaves us with the gauge (\ref{gauge}) restricted to the condition $L=0$.
\end{Uwaga}
With $B=0$ the key function simplifies significantly and it reads
\begin{equation}
\label{funkcja_kluczowa_postac_wyjsciowa_uproszczona}
W=  A(q,y) \, x + C(q,y)
\end{equation}
while the $\mathcal{HH}$-equation reduces to a pair of equations
\begin{subequations}
\label{HH_resztki}
\begin{eqnarray}
\label{HH_resztki_1}
 A_{y}^{2} - 2A A_{yy} + A_{yq} - 3 \mu_{0} C &=& 0
\\ 
\label{HH_resztki_2}
C_{yq} - 2A C_{yy} + 2 A_{y} C_{y}   &=& 0
\end{eqnarray}
\end{subequations}

Now we are ready to determine the properties of $\mathcal{I}_{3}$. We find that $\theta_{3} \sim C_{y}$ and $\varrho_{3} =0$. Hence, we deal with two different classes:
\begin{subequations}
\begin{eqnarray}
\label{warunek_na_podtyp_pm_pm}
\textrm{Types} \ \  \{ [\textrm{D}]^{ee} \otimes [\textrm{deg}]^{e} , [+-,+-]   \} \ \ \textrm{for} \ \ C_{y} \ne 0
\\ 
\label{warunek_na_podtyp_pm_mm}
\textrm{Types} \ \  \{ [\textrm{D}]^{ee} \otimes [\textrm{deg}]^{e} , [+-,--]   \} \ \ \textrm{for} \ \ C_{y} = 0
\end{eqnarray}
\end{subequations}

Let us summarize the results of this Section. We have found that the metric of any $\mathcal{HH}$-space of the type $\{ [\textrm{D}]^{ee} \otimes [\textrm{deg}]^{e} , [+-,+-]   \}$ or $\{ [\textrm{D}]^{ee} \otimes [\textrm{deg}]^{e} , [+-,--]   \}$ takes the form
\begin{eqnarray}
\label{metryka_HH_Dee_x_dege_twistfree}
\frac{1}{2} ds^{2} &=& x^{-2} \bigg\{ dq dy - dp dx + \left( \mu_{0} \, x^{3} - A_{yy} \, x^{2} - C_{yy} \, x + \frac{\Lambda}{6}  \right) \, dp^{2} 
\\ \nonumber
&& \ \ \ \ \ \ \ \ + 2 C_{y} \, dpdq + 2A \, dq^{2} \bigg\}
\end{eqnarray}
where $\mu_{0}=1$, $\Lambda$ is the cosmological constant, $A=A(q,y)$ and $C=C(q,y)$ are holomorphic functions which satisfy Eqs. (\ref{HH_resztki}).

\subsubsection{ASD conformal curvature}

Before we focus on exact solutions, we discuss possible Petrov-Penrose types of the ASD Weyl spinor. Inserting (\ref{funkcja_kluczowa_postac_wyjsciowa_uproszczona}) into (\ref{ASD_conformal_curvature}) one finds that nonzero ASD conformal coefficients read 
\begin{eqnarray}
\label{ASD_conformal_curvature_1}
\dot{C}^{(1)} &=& 2x^{3} (A_{yyyy} \, x + C_{yyyy})
\\ \nonumber
\dot{C}^{(2)} &=& 2 x^{3}   A_{yyy}
\\ \nonumber
\dot{C}^{(3)} &=& -2 x^{3} \mu_{0}
\end{eqnarray}
The ASD Weyl spinor takes the form 
\begin{equation}
\label{spinorowa_postac_ASD_Weyl_1}
C_{\dot{A}\dot{B}\dot{C}\dot{D}} = -6 \mu_{0} x^{3} \, n^{+}_{(\dot{A}} n^{-}_{\dot{B}} k_{\dot{C}} k_{\dot{D})}
\end{equation}
where
\begin{equation}
n^{\pm}_{\dot{A}} = l_{\dot{A}} - \left( \frac{A_{yyy}}{3 \mu_{0}} \pm \frac{1}{6 \mu_{0}} \sqrt{4 A_{yyy}^{2} + 6 \mu_{0} (A_{yyyy} \, x + C_{yyyy})}  \right)  k_{\dot{A}}
\end{equation}
while $(l_{\dot{A}}, k_{\dot{B}})$ is a basis of 1-index dotted spinors normalized to 1, $k^{\dot{A}} l_{\dot{A}} = 1$. From (\ref{spinorowa_postac_ASD_Weyl_1}) it follows that the ASD Weyl spinor can be type-$[\textrm{II}]^{e}$ or type-$[\textrm{D}]^{ee}$. A stronger algebraic degeneration involves $\mu_{0} = 0$ what is a contradiction. Criteria are gathered in the Table \ref{Kryteria_krzywizy_ASD}.
\begin{table}[ht]
\begin{center}
\begin{tabular}{|c|c|}   \hline
Petrov-Penrose type &  Criteria     \\  \hline
$[\textrm{D}]^{ee} \otimes [\textrm{II}]^{e}$ & $A_{yyyy} \ne 0$ or $3 \mu_{0} C_{yyyy} + 2 A_{yyy}^{2} \ne 0 $  \\ \hline
$[\textrm{D}]^{ee} \otimes [\textrm{D}]^{ee}$ & $A_{yyyy} = 0$ and $3 \mu_{0} C_{yyyy} + 2 A_{yyy}^{2} = 0 $  \\ \hline
\end{tabular}
\caption{Criteria for Petrov-Penrose types.}
\label{Kryteria_krzywizy_ASD}
\end{center}
\end{table}

Note, that the spinor $k_{\dot{A}}$ is a double dotted Penrose spinor so it generates the ASD $\mathcal{C}$. Hence, it is proportional to $m_{\dot{A}}$.

\subsubsection{Symmetries}
\label{Symmetries_general_approach}

A homothetic vector admitted by the metric (\ref{metryka_HH_Dee_x_dege_twistfree}) has the form\footnote{To prove (\ref{homotetyczny_ogolny}) and other formulas buried in this Section it is enough to insert the form of the key function (\ref{funkcja_kluczowa_postac_wyjsciowa_uproszczona}) into formulas from Section 3.1.3 of  \cite{Chudecki_Ref_3}.}
\begin{equation}
\label{homotetyczny_ogolny}
K = \widetilde{a} \, \partial_{q} + \widetilde{c}_{0} \, \partial_{p} - (\widetilde{a}_{q} y + \widetilde{\epsilon} ) \partial_{y} + \frac{2}{3} \chi_{0} (2p \partial_{p} - x \partial_{x} + y \partial_{y} )
\end{equation}
where $\widetilde{a}=\widetilde{a}(q)$, $\widetilde{c_{0}}= \textrm{const}$ and $\widetilde{\epsilon} = \widetilde{\epsilon} (q)$, $\chi_{0}$ is a homothetic factor. Equations $\nabla_{(a} K_{b)} = \chi_{0} g_{ab}$  reduce to the system
\begin{subequations}
\label{zredukowane_master}
\begin{eqnarray}
\label{zredukowane_master_1}
&& \widetilde{a} A_{q} + \left( \frac{2}{3} \chi_{0} y - \widetilde{a}_{q} y - \widetilde{\epsilon} \right) A_{y} + \left( 2 \widetilde{a}_{q} - \frac{2}{3} \chi_{0} \right) A - \frac{1}{2} \widetilde{a}_{qq} y - \frac{1}{2} \widetilde{\epsilon}_{q} =0 
\\ 
\label{zredukowane_master_2}
&& \widetilde{a} C_{q} + \left( \frac{2}{3} \chi_{0} y - \widetilde{a}_{q} y - \widetilde{\epsilon} \right) C_{y} + 2 \widetilde{a}_{q} C - \frac{\widetilde{a}_{qqq}}{6 \mu_{0}} = 0
\end{eqnarray}
\end{subequations}
with an integrability condition 
\begin{equation}
\label{symmetry_integrability_condition}
\Lambda \chi_{0} =0
\end{equation}
Under (\ref{gauge}) the transformation formulas for $\widetilde{a}$, $\widetilde{c}_{0}$ and $\widetilde{\epsilon}$ read
\begin{subequations}
\label{symmetry_transformation}
\begin{eqnarray}
\label{symmetry_transformation_a}
\widetilde{a}' &=& f \widetilde{a}
\\ 
\label{symmetry_transformation_c}
\widetilde{c}_{0}' &=&  \widetilde{c}_{0}  - \frac{4}{3}\chi_{0} h_{0}
\\ 
\label{symmetry_transformation_epsilon}
\widetilde{\epsilon}' &=&  f^{-1} \widetilde{\epsilon} - \sigma \left(  \widetilde{a}_{q} - \frac{2}{3} \chi_{0} + \widetilde{a} \partial_{q} (\ln (\sigma f)) \right)
\end{eqnarray}
\end{subequations}
Eqs. (\ref{zredukowane_master}) are identically satisfied for $\widetilde{a} = \widetilde{\epsilon} = \chi_{0}=0$ what implies that every space of the type $\{ [\textrm{D}]^{ee} \otimes [\textrm{deg}]^{e}, [+-, +-] \}$ or $\{ [\textrm{D}]^{ee} \otimes [\textrm{deg}]^{e}, [+-, --] \}$ is equipped with a Killing vector\footnote{In fact, it holds true also for much more general spaces of the types $[\textrm{D}]^{ee} \otimes [\textrm{any}]$.}
\begin{equation}
\label{pierwszy_Killing}
K_{1} = \frac{\partial}{\partial p}
\end{equation}
It means that without any loss of generality one puts $\widetilde{c}_{0} = 0$ in (\ref{homotetyczny_ogolny}). Indeed, a linear combination of two homothetic vectors with constant coefficients is also a homothetic vector. Hence, any homothetic vector other then (\ref{pierwszy_Killing}) has the form (\ref{homotetyczny_ogolny}) with $\widetilde{c}_{0} = 0$. 
\begin{Uwaga}
\normalfont
The choice $\widetilde{c}_{0} = 0$ leaves us with the gauge (\ref{gauge}) restricted to the condition $h=0$.
\end{Uwaga}

% &&&&&&&&&&&&&&&&&&&&&&&&&&&&&&&&&&&&&&&&&&&&&&&&&&&&&&&&&&&&&&&&&&&&&&&&&&&&&&&&&&&&&&&&&&&&&&&&&

\section{Types $[\textrm{D}]^{ee} \otimes [\textrm{II}]^{e}$}
\label{section_Dee_x_IIe}
\setcounter{equation}{0}

\subsection{Type $\{ [\textrm{D}]^{ee} \otimes [\textrm{II}]^{e}, [+-,+-] \}$}

In this Section we deal with the case for which
\begin{equation}
\label{warunki_Typ_II_pm_pm}
C_{y} \ne 0,  \ A_{yyyy} \ne 0 \ \textrm{or} \ 2A_{yyy}^{2} +3 \mu_{0} C_{yyyy} \ne 0
\end{equation}
(Compare (\ref{warunek_na_podtyp_pm_pm}) and Table \ref{Kryteria_krzywizy_ASD}).

\subsubsection{General case}

\begin{Twierdzenie}
\label{Twierdzenie_typ_II_pm_pm}
Let $(\mathcal{M}, ds^{2})$ be an Einstein complex space of the type $\{ [\textrm{D}]^{ee} \otimes [\textrm{II}]^{e}, [+-,+-] \}$. Then there exists a local coordinate system $(q,p,x,w)$ such that the metric takes the form 
\begin{eqnarray}
\label{metryka_TypII_pm_pm_ostateczna}
\frac{1}{2} ds^{2} &=& x^{-2} \bigg\{ -dpdx +\frac{M_{q}}{6 \mu_{0}} dq dw + \frac{M_{q}}{3 \mu_{0}} dpdq
\\ \nonumber
&& \ \ \ \ \ \
+ \left( \mu_{0} x^{3} + \frac{\Lambda}{6} - \frac{1}{2} M \, x^{2} -  \frac{M_{qw}}{M_{q}} \, x  \right) dp^{2} \bigg\}
\end{eqnarray}
where $\mu_{0}=1$, $\Lambda$ is the cosmological constant and 
\begin{equation}
\label{Typ_II_pm_pm_jedyne_rownanie}
36 \mu_{0} M_{w} = -M^{3} + a(w) \, M + b(w)
\end{equation}
where $M=M(q,w)$, $a=a(w)$ and $b=b(w)$ are holomorphic functions such that $M_{q} \ne 0$ and
\begin{equation}
\label{Typ_II_pm_pm_jedyne_warunki}
a_{w} M + b_{w} \ne 0 \ \ \textrm{or} \ \ 36 \mu_{0} a_{ww} - aa_{w} - 3a_{w} M^{2} - 6b_{w} M \ne 0
\end{equation}
\end{Twierdzenie}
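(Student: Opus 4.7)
My plan is to reduce the two-function system $(A,C)$ of (\ref{HH_resztki}) to a single scalar function $M(q,w)$ by introducing a change of the $y$-coordinate that kills the $dq^2$ term in the metric (\ref{metryka_HH_Dee_x_dege_twistfree}). The target metric (\ref{metryka_TypII_pm_pm_ostateczna}) has no $dq^2$ term and the coefficient of $dp\,dq$ is proportional to $C_y$, so I would introduce $w = w(q,y)$ with $w_y = 1/C_y$ (permissible by the subtype assumption $C_y \neq 0$). Requiring that $dq\,dy + 2A\,dq^{2}$ become a pure $dq\,dw$ term forces $w_q = 2Aw_y = 2A/C_y$. The integrability condition $(1/C_y)_q = (2A/C_y)_y$ of this overdetermined linear system is precisely the $\mathcal{HH}$ equation (\ref{HH_resztki_2}), so a local $w$ exists.

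With the new coordinate I would define $M(q,w) := 2A_{yy}(q,y(q,w))$. The key identity $M_q|_w = 6\mu_0 C_y$ follows from applying $\partial_q - 2A\partial_y$ (which equals $\partial_q$ at fixed $w$) to $2A_{yy}$ and using the $y$-derivative of (\ref{HH_resztki_1}), namely $A_{yyq} - 2AA_{yyy} = 3\mu_0 C_y$. This single identity makes the metric line up with (\ref{metryka_TypII_pm_pm_ostateczna}): the $dq\,dw$ coefficient becomes $M_q/(6\mu_0)$, the $dp\,dq$ coefficient becomes $M_q/(3\mu_0)$, the $x^{2}$ coefficient in $\mathcal{A}$ is $-\tfrac{1}{2}M$, and the $x$ coefficient $-C_{yy}$ rewrites as $-M_{qw}/M_q$ because $\partial_w \ln M_q = (C_{yy}/C_y)\,y_w = C_{yy}$. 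The condition $M_q \neq 0$ is then equivalent to $C_y \neq 0$.

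To derive the defining equation (\ref{Typ_II_pm_pm_jedyne_rownanie}) I would introduce $\Phi(q,w) := 36\mu_0 M_w + M^{3}$ and show it is an affine function of $M$ with coefficients depending only on $w$. A direct calculation, using the $y$-derivative of (\ref{HH_resztki_1}) and (\ref{HH_resztki_2}), gives $\Phi_q|_w = a\,M_q$ with the explicit expression $a = 36\mu_0 C_{yy} + 12A_{yy}^{2}$. The crucial and most delicate step is verifying $\partial_q|_w a = 0$, i.e.\ that $a$ depends only on $w$; this follows from differentiating (\ref{HH_resztki_2}) once in $y$, which yields $C_{yyq} - 2AC_{yyy} = -2A_{yy}C_y$, and combining it with the first $y$-derivative of (\ref{HH_resztki_1}): the two contributions to $a_q|_w$ cancel exactly. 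Once $a = a(w)$ is established, one more integration in $q$ produces the constant of integration $b(w)$, giving $\Phi = a(w)M + b(w)$, which is (\ref{Typ_II_pm_pm_jedyne_rownanie}).

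Finally I would translate the type-$[\textrm{II}]^{e}$ criterion (\ref{warunki_Typ_II_pm_pm}) into (\ref{Typ_II_pm_pm_jedyne_warunki}). Using $A_{yyy} = 3\mu_0 M_w/M_q$ and differentiating (\ref{Typ_II_pm_pm_jedyne_rownanie}) with respect to $w$ one obtains $A_{yyyy} = \mu_0(a_w M + b_w)/(2M_q^{2})$, so $A_{yyyy}\neq 0$ is equivalent to $a_w M + b_w \neq 0$. A longer but routine calculation, exploiting also the identity $36\mu_0 M_{wq} = (a-3M^{2})M_q$ obtained by differentiating (\ref{Typ_II_pm_pm_jedyne_rownanie}) in $q$, yields $2A_{yyy}^{2} + 3\mu_0 C_{yyyy}$ proportional to $36\mu_0 a_{ww} - aa_w - 3a_w M^{2} - 6b_w M$, giving the second alternative. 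The principal obstacle is clearly step three: one must guess the correct cubic combination $\Phi$ and then carry out the cancellations that show its defining coefficient depends only on $w$, since that is where both $\mathcal{HH}$ equations enter in a balanced way.
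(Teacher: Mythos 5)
Your proposal is correct and follows essentially the same route as the paper: your potential $w$ with $w_y=1/C_y$, $w_q=2A/C_y$ is exactly the paper's $\Omega$, the function $M=2A_{yy}$ and the identity $M_q\vert_w=6\mu_0 C_y$ are the same, and your $a=36\mu_0 C_{yy}+12A_{yy}^2$ is precisely the first integration function the paper produces when it integrates the consistency condition twice in $q$. The only (cosmetic) difference is that you obtain the Abel equation by directly verifying that $36\mu_0 M_w+M^3-aM$ is $q$-independent, whereas the paper reaches the same conclusion by eliminating $Y_w$ and integrating twice.
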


\begin{proof}
The condition $C_{y} \ne 0$ allows to rewrite Eq. (\ref{HH_resztki_2}) in the form
\begin{equation}
\frac{\partial}{\partial q} \left( \frac{1}{C_{y}} \right) = \frac{\partial}{\partial y} \left( \frac{2A}{C_{y}} \right)
\end{equation}
Hence, one concludes the existence of a function $\Omega = \Omega (q,y)$ such that
\begin{equation}
\label{definicja_Omega}
C_{y} = \frac{1}{\Omega_{y}}, \ 2A = \frac{\Omega_{q}}{\Omega_{y}}
\end{equation}
The metric (\ref{metryka_HH_Dee_x_dege_twistfree}) depends on $C_{y}$. Thus we differentiate Eq. (\ref{HH_resztki_1}) with respect to $y$ and using 
 (\ref{definicja_Omega}) we arrive at the equation
\begin{equation}
\label{drugie_rownanie_2}
\Omega_{y} M_{q}-\Omega_{q} M_{y}   = 6 \mu_{0}, \ M:= 2A_{yy}= \partial^{2}_{y} \left(  \frac{\Omega_{q}}{\Omega_{y}} \right)
\end{equation}
Multiplying (\ref{drugie_rownanie_2}) by $dy \wedge dq$ we get
\begin{equation}
\label{drugie_rownanie_3}
d \Omega \wedge  dM  = 6 \mu_{0} \, dy \wedge dq
\end{equation}
Eq. (\ref{drugie_rownanie_3}) suggests that $\Omega$ should be treated as a new variable. Thus, we introduce a coordinate transformation
\begin{equation}
\label{transformacja_coordiantes}
(q,y) \rightarrow (q', w): \ \ q'=q, \ w = \Omega (q,y) \ \ \rightarrow \ \ y = Y (q',w)
\end{equation}
Hence, from (\ref{drugie_rownanie_3}) it follows
\begin{equation}
\label{trzecie_rownanie}
M_{q'} = 6 \mu_{0} Y_{w}
\end{equation}
Transformations of derivatives of an arbitrary function $F=F(q,y) = \widehat{F} (q(q'), Y(q',w)) = \widehat{F} (q',w)$ read
\begin{eqnarray}
\label{transformacja_pochodnych_ffunkci}
F_{y} &=& \Omega_{y} \widehat{F}_{w} 
\\ \nonumber
F_{q} &=& \widehat{F}_{q'} + \Omega_{q} \widehat{F}_{w} 
\end{eqnarray}
Specifying (\ref{transformacja_pochodnych_ffunkci}) for $F = y$ and $\widehat{F} = Y$ one gets
\begin{eqnarray}
\label{transformacja_pochodnych_ffunkci_y}
1 = \Omega_{y} Y_{w} \ &\rightarrow & \ \Omega_{y} = \frac{1}{Y_{w}}
\\ \nonumber
0 = Y_{q'} + \Omega_{q} Y_{w} \ &\rightarrow & \ \Omega_{q} = -\frac{Y_{q'}}{Y_{w}}
\end{eqnarray}
Hence
\begin{equation}
\label{wzory_na_Cy_ina_A}
C_{y} = Y_{w} , \ C_{yy} = \frac{Y_{ww}}{Y_{w}}  , \ 2A = -Y_{q'}
\end{equation}
The only equation which remains to be solved is a consistency condition $M = \partial_{y}^{2} (-Y_{q'})$. Using (\ref{transformacja_pochodnych_ffunkci}) one arrives at the equation
\begin{equation}
\label{czwarte_rownanie}
M Y_{w} = - \partial_{w} \left(  \frac{Y_{q'w}}{Y_{w}}   \right)
\end{equation} 
Eliminating $Y_{w}$ from (\ref{czwarte_rownanie}) by virtue of (\ref{trzecie_rownanie}) we obtain an equation which can be simply integrated twice with respect to $q'$. This proves (\ref{Typ_II_pm_pm_jedyne_rownanie}). To write down the metric we need to express $A$, $A_{yy}$, $C_{y}$ and $C_{yy}$ by $M$:
\begin{equation}
\label{wzory_na_Cy_ina_A_przez_M}
C_{y} = \frac{M_{q'}}{6 \mu_{0}} , \ C_{yy} = \frac{M_{q'w}}{M_{q'}}  , \ 2A = -Y_{q'} , \ 2A_{yy} = M
\end{equation}
Fortunately, factors with $Y_{q'}$ cancel in the metric. Thus, we arrive at (\ref{metryka_TypII_pm_pm_ostateczna}). Finally, after some straightforward but tedious calculations type-[II] conditions (\ref{Typ_II_pm_pm_jedyne_warunki}) are proved.
\end{proof}

A simple example of a solution of Eq. (\ref{Typ_II_pm_pm_jedyne_rownanie}) can be given if we additionally assume that $b(w) =0$ holds true. In this case a general solution of Eq. (\ref{Typ_II_pm_pm_jedyne_rownanie}) can be found and it reads
\begin{equation}
M (q,w) = \left(  \frac{18 \mu_{0} H_{w}}{H+Q} \right)^{\frac{1}{2}}, \ H=H(w), \ Q=Q(q), \ H_{w} \ne 0, \ \partial_{w} \left( \frac{H_{ww}}{H_{w}} \right) \ne 0
\end{equation}
where $H=H(w)$ is a function such that $a(w) = 18 \mu_{0} ( H_{ww} / H_{w})$ .

\subsubsection{Solution with a 2D symmetry algebra}
\label{Typ_II_pm_pm_2D}

In this Section we prove, that the type $\{ [\textrm{D}]^{ee} \otimes [\textrm{II}]^{e}, [+-,+-] \}$ admits at most 2D symmetry algebra. 

\begin{Lemat}
\label{Lemat_o_nieistnieniu_Killinga}
A complex Einstein space of the type $\{ [\textrm{D}]^{ee} \otimes [\textrm{II}]^{e}, [+-,+-] \}$ which metric
is given in the general form (\ref{metryka_TypII_pm_pm_ostateczna}) does not admit any Killing vector other than $K_{1} = \partial_{p}$. 
\end{Lemat}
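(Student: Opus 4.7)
The plan is to assume a Killing vector $K$ distinct from $K_{1} = \partial_{p}$ exists and derive a contradiction with the type-$[\textrm{II}]$ assumption. Specializing the homothetic vector (\ref{homotetyczny_ogolny}) to $\chi_{0} = 0$ (Killing) and $\widetilde{c}_{0} = 0$ (allowed by the remark before the lemma), I would split on whether $\widetilde{a}$ vanishes. If $\widetilde{a} \equiv 0$, equation (\ref{zredukowane_master_2}) reduces to $\widetilde{\epsilon} C_{y} = 0$, which combined with $C_{y} \ne 0$ from (\ref{warunek_na_podtyp_pm_pm}) kills $\widetilde{\epsilon}$ and hence $K$ entirely.

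So I would focus on $\widetilde{a} \ne 0$. Here I would exhaust the residual gauge freedom: by (\ref{symmetry_transformation_a}) the choice $f = 1/\widetilde{a}$ normalises $\widetilde{a}$ to $1$, and by (\ref{symmetry_transformation_epsilon}) a subsequent choice of $\sigma(q)$ absorbs $\widetilde{\epsilon}$. In the gauge-fixed coordinates $K = \partial_{q}$, and the symmetry system (\ref{zredukowane_master}) collapses to $A_{q} = 0 = C_{q}$; so the key-function data depend only on $y$.

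Next I would substitute $A = A(y)$, $C = C(y)$ into the $\mathcal{HH}$-equation remnants (\ref{HH_resztki}). Equation (\ref{HH_resztki_2}) becomes $A C_{yy} = A_{y} C_{y}$, which integrates to $A = \lambda C_{y}$ for a nonzero constant $\lambda$, after ruling out the degenerate branches $A \equiv 0$ and $A$ constant (both of which force $C \equiv 0$ through (\ref{HH_resztki_1}) and contradict $C_{y} \ne 0$). Substituting $A = \lambda C_{y}$ back into (\ref{HH_resztki_1}) leaves the single nonlinear ODE $\lambda^{2}(C_{yy}^{2} - 2 C_{y} C_{yyy}) = 3 \mu_{0} C$.

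The decisive step is to differentiate this ODE once more with respect to $y$: the cubic terms telescope, leaving a first-order relation which forces $C_{yyyy}$ to equal a specific nonzero constant. From $A = \lambda C_{y}$ it follows at once that $A_{yyyy} = 0$, and a short substitution verifies $3 \mu_{0} C_{yyyy} + 2 A_{yyy}^{2} = 0$. Both type-$[\textrm{D}]$ criteria from Table \ref{Kryteria_krzywizy_ASD} are then satisfied, contradicting the type-$[\textrm{II}]$ assumption (\ref{warunki_Typ_II_pm_pm}). The part I expect to need the most care is the bookkeeping of gauge freedom --- the layer already used in Section \ref{section_HH_spaces} to reduce the key function plus the further normalisations (\ref{symmetry_transformation}) of $\widetilde{a}$ and $\widetilde{\epsilon}$ --- together with the treatment of the degenerate subcases in the integration step; the final algebraic cancellation itself is immediate once the problem has been reduced to that ODE.
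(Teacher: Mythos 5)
Your argument is correct and follows essentially the same route as the paper's own proof: the dichotomy on $\widetilde{a}$, the gauge normalisation to $K=\partial_q$ forcing $A=A(y)$, $C=C(y)$, the integration of (\ref{HH_resztki_2}) to $A=A_0C_y$, and the differentiation of (\ref{HH_resztki_1}) yielding constant $C_{yyyy}$ and hence both type-$[\textrm{D}]$ criteria. The only cosmetic difference is that the paper integrates $AC_{yy}=A_yC_y$ by writing $\partial_y(A/C_y)=0$ (legitimate since $C_y\ne 0$), which sidesteps your separate treatment of the branches $A\equiv 0$ and $A$ constant and instead extracts $A_0\ne 0$ afterwards from $\mu_0\ne 0$.
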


\begin{proof}
Let us assume that a space of the type $\{ [\textrm{D}]^{ee} \otimes [\textrm{II}]^{e}, [+-,+-] \}$ admits the second
Killing vector $K_{2}$ and assume that for $K_{2}$ we have $\widetilde{a}=0$. From (\ref{zredukowane_master_2}) with $\chi_{0}=0$ it follows that $\widetilde{\epsilon} C_{y} = 0$. Because $C_{y} \ne 0$, we find that $ \widetilde{\epsilon} = 0$. However, in such a case, $K_{2} = 0$ what is a contradiction. Thus, $\widetilde{a} \ne 0$. Hence, without any loss of generality one puts $\widetilde{a}=1$ and $\widetilde{\epsilon}=0$ (compare (\ref{symmetry_transformation})). Consequently, $K_{2} = \partial_{q}$. From Eqs. (\ref{zredukowane_master}) one finds that $C=C(y)$ and $A=A(y)$. Hence, from (\ref{HH_resztki_2}) we get $A = A_{0} C_{y}$ where $A_{0}$ is a constant. Differentiating (\ref{HH_resztki_1}) with respect to $y$ one gets $2A_{0}^{2} C_{yyyy} + 3 \mu_{0}=0$. The condition $\mu_{0} \ne 0$ implies that $A_{0} \ne 0$ and $C_{yyyy} = -\dfrac{3 \mu_{0}}{2 A_{0}^{2}}$. Consequently, $A_{yyy} = - \dfrac{3 \mu_{0}}{2 A_{0}}$. However, with such forms of $A_{yyy}$ and $C_{yyyy}$ we find that both type-[II] conditions are violated what is a contradiction. 
\end{proof}

Now let us assume that the second symmetry is generated by a proper homothetic vector.

\begin{Twierdzenie}
\label{Twierdzenie_typ_II_pm_pm_2Dalgebra}
Let $(\mathcal{M}, ds^{2})$ be an Einstein complex space of the type $\{ [\textrm{D}]^{ee} \otimes [\textrm{II}]^{e}, [+-,+-] \}$ equipped with a 2D symmetry algebra $A_{2,1}$. Then there exists a local
coordinate system $(q,p,x,w)$ such that the metric takes the form (\ref{metryka_TypII_pm_pm_ostateczna}) with $\Lambda=0$ and 
\begin{equation}
\label{rozwiazanie_na_M}
M= e^{ - \frac{2}{3} \chi_{0}q} \, t^{-\frac{1}{2}}S(t), \ \ t:= e^{ - \frac{4}{3} \chi_{0}q} \, w
\end{equation}
where $S=S(t)$ is a holomorphic function such that $S_{t} \ne 0$ and
\begin{equation}
\label{rownanie_Abela_symetria}
36 \mu_{0} t S_{t} = - S^{3} + (a_{0} + 18 \mu_{0}) S + b_{0} 
\end{equation}
$a_{0}$ and $b_{0}$ are constants such that
\begin{equation}
\label{Typ_II_pm_pm_jedyne_warunki_symetria_homotetyczna}
a_{0}  S + \frac{3}{2} b_{0} \ne 0 , \ 3a_{0}  S^{2} + 9 b_{0} S + 72 a_{0} \mu_{0} + a_{0}^{2} \ne 0
\end{equation}
\end{Twierdzenie}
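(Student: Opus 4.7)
The plan is built around Lemma \ref{Lemat_o_nieistnieniu_Killinga}: since no second Killing vector can exist, the second generator of the $A_{2,1}$ algebra must be a proper homothetic vector $K_{2}$ with $\chi_{0}\neq 0$, and the integrability condition (\ref{symmetry_integrability_condition}) immediately forces $\Lambda=0$. First I would exhaust the residual gauge freedom encoded in (\ref{symmetry_transformation}): choosing $f=1/\widetilde{a}$ normalizes $\widetilde{a}$ to $1$, and a subsequent first-order linear ODE in $\sigma$ coming from (\ref{symmetry_transformation_epsilon}) kills $\widetilde{\epsilon}$. Together with the earlier reduction $\widetilde{c}_{0}=0$, this brings $K_{2}$ to the canonical form
\begin{equation*}
K_{2}=\partial_{q}+\tfrac{2}{3}\chi_{0}\bigl(2p\partial_{p}-x\partial_{x}+y\partial_{y}\bigr).
\end{equation*}

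Next I would feed this $K_{2}$ into the master system (\ref{zredukowane_master}). The two equations become linear first-order PDEs with characteristic invariant $s:=y\,e^{-2\chi_{0}q/3}$, yielding $A(q,y)=e^{2\chi_{0}q/3}\widetilde{A}(s)$ and $C(q,y)=\widetilde{C}(s)$. Inserting these into the defining relation (\ref{definicja_Omega}) for $\Omega$, integrating $\Omega_{y}=1/C_{y}$ and imposing the compatibility $2A=\Omega_{q}/\Omega_{y}$ produces $\Omega=e^{4\chi_{0}q/3}\phi(s)$, where the $q$-dependent integration constant is forced to be a genuine constant and absorbed by a harmless shift of $w$. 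Consequently $t:=e^{-4\chi_{0}q/3}w=\phi(s)$ is a $K_{2}$-invariant, and $M=2A_{yy}=2\widetilde{A}''(s)\,e^{-2\chi_{0}q/3}$ separates as $e^{-2\chi_{0}q/3}$ times a function of $t$; writing that function as $t^{-1/2}S(t)$ is a cosmetic rescaling chosen to deliver the cleanest ODE and produces the form (\ref{rozwiazanie_na_M}).

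The final algebraic step is to substitute $M=e^{-2\chi_{0}q/3}t^{-1/2}S(t)$ together with $w=e^{4\chi_{0}q/3}t$ into (\ref{Typ_II_pm_pm_jedyne_rownanie}) and strip the common prefactor $e^{-2\chi_{0}q}t^{-3/2}$. The result is an identity of the shape
\begin{equation*}
36\mu_{0}\bigl(tS_{t}-\tfrac{1}{2}S\bigr)=-S^{3}+\bigl[a(w)w\bigr]S+\bigl[b(w)w^{3/2}\bigr],
\end{equation*}
and since $q$ and $t$ are independent, the bracketed quantities must be constants; denoting them $a_{0}$ and $b_{0}$ gives $a(w)=a_{0}/w$, $b(w)=b_{0}/w^{3/2}$, and the PDE collapses to the Abel equation (\ref{rownanie_Abela_symetria}). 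Feeding these explicit $a,b$ and the ansatz for $M$ back into (\ref{Typ_II_pm_pm_jedyne_warunki}) reduces the type-$[\textrm{II}]$ criteria, after the scalar $e^{\cdot}t^{\cdot}$ prefactors cancel, precisely to (\ref{Typ_II_pm_pm_jedyne_warunki_symetria_homotetyczna}).

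The conceptual backbone is straightforward; the main obstacle is calculational bookkeeping. The delicate part is verifying that the exponential $q$-weights enter (\ref{Typ_II_pm_pm_jedyne_rownanie}) in the unique way that forces $a(w)w$ and $b(w)w^{3/2}$ to be constants (ruling out spurious scaling ans\"atze), and then checking that the two type-$[\textrm{II}]$ non-degeneracy conditions transform to (\ref{Typ_II_pm_pm_jedyne_warunki_symetria_homotetyczna}) without leftover $q$- or $t$-dependent residues. This final verification is essentially mechanical but is where arithmetic errors are most likely, so I would allot it the most care.
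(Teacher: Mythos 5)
Your proposal is correct and follows essentially the same route as the paper: canonical form $K_{2}=\partial_{q}+\tfrac{2}{3}\chi_{0}(2p\partial_{p}-x\partial_{x}+y\partial_{y})$ via Lemma \ref{Lemat_o_nieistnieniu_Killinga} and the gauge (\ref{symmetry_transformation}), $\Lambda=0$ from $\Lambda\chi_{0}=0$, separation of $M$ in the invariant $t=e^{-\frac{4}{3}\chi_{0}q}w$, and then forcing $a(w)w$ and $b(w)w^{3/2}$ to be constants when (\ref{Typ_II_pm_pm_jedyne_rownanie}) is rewritten in $(t,w)$, yielding the Abel equation (\ref{rownanie_Abela_symetria}). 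The only cosmetic difference is that you solve the symmetry PDEs for $A$ and $C$ directly in $(q,y)$ and read off $M=2A_{yy}$, whereas the paper works with $Y(q',w)$, integrates $M_{q'}=6\mu_{0}Y_{w}$, and must then show the integration function $m(w)=m_{0}w^{-1/2}$ can be absorbed into $S$.
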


\begin{proof}
A proper homothetic vector can be brought to the form 
\begin{equation}
\label{drugi_homotetyczny_II_pm_pm}
K_{2} =  \partial_{q}  + \frac{2}{3} \chi_{0} (2p \partial_{p} - x \partial_{x} + y \partial_{y} )
\end{equation}
without any loss of generality. Indeed, the assumption $\widetilde{a}=0$ leads to $K_{2}=0$ what is a contradiction (compare Lemma \ref{Lemat_o_nieistnieniu_Killinga}). Thus, $\widetilde{a} \ne 0 \ \Longrightarrow \ \widetilde{a}=1$ and $\widetilde{\epsilon}=0$ (compare (\ref{symmetry_transformation})). Hence, (\ref{drugi_homotetyczny_II_pm_pm}) is proved. The commutation rules read
\begin{equation}
[K_{1}, K_{2}] = \frac{4 \chi_{0}}{3} K_{1}
\end{equation}
Thus, a symmetry algebra is non-abelian $A_{2,1}$. 

With $K_{2}$ in the form (\ref{drugi_homotetyczny_II_pm_pm}), Eqs. (\ref{zredukowane_master}) reads
\begin{subequations}
\label{rownanie_master_dwie_symetrie}
\begin{eqnarray}
\label{rownanie_master_dwie_symetrie_1}
A_{q} + \frac{2}{3} \chi_{0} (y A_{y} - A) = 0
\\ 
\label{rownanie_master_dwie_symetrie_2}
C_{q} + \frac{2}{3} \chi_{0} y C_{y} = 0
\end{eqnarray}
\end{subequations}
Transformation of Eqs. (\ref{rownanie_master_dwie_symetrie}) into the coordinate system $(q',p,x,w)$ (\ref{transformacja_coordiantes}) leads to the equations
\begin{subequations}
\label{rownanie_master_dwie_symetrie_w}
\begin{eqnarray}
\label{rownanie_master_dwie_symetrie_w_1}
\partial_{q'} \left( \frac{Y_{q'}}{Y_{w}} - \frac{2}{3} \chi_{0} \frac{Y}{Y_{w}} \right) =0
\\
\label{rownanie_master_dwie_symetrie_w_2}
\partial_{w} \left( \frac{Y_{q'}}{Y_{w}} - \frac{2}{3} \chi_{0} \frac{Y}{Y_{w}} +  \frac{4}{3} \chi_{0} w \right) =0
\end{eqnarray}
\end{subequations}
what implies
\begin{equation}
\label{rownanie_na_Y}
Y_{q'} - \frac{2}{3}\chi_{0} ( Y - 2wY_{w} ) = 0
\end{equation}
(a constant of integration has been absorbed into $w$). Eq. (\ref{rownanie_na_Y}) can be simply solved and the solution yields $Y = e^{\frac{2}{3} \chi_{0} q'} R(t)$ where $t := e^{-\frac{4}{3} \chi_{0} q'} w$. Hence, 
\begin{equation}
\label{rownanie_na_Mpoqprim}
M_{q'} = 6 \mu_{0} Y_{w} = 6 \mu_{0} e^{-\frac{2}{3} \chi_{0}q'} R_{t}
\end{equation}
Now we introduce a function $S=S(t)$ such that 
\begin{equation}
R_{t} =: - \frac{ 2 \chi_{0}}{9 \mu_{0}} t^{\frac{1}{2}} S_{t}
\end{equation}
It allows to integrate Eq. (\ref{rownanie_na_Mpoqprim}), because
\begin{equation}
\label{rozwiazanie_naM}
M_{q'}=\partial_{q'} (e^{-\frac{2}{3}\chi_{0} q'} t^{- \frac{1}{2} }S ) \ \longrightarrow \ M = e^{-\frac{2}{3}\chi_{0} q'} t^{- \frac{1}{2} } S + m(w)
\end{equation}
Note, that $M_{q'}$ is necessarily nonzero what implies that $S_{t} \ne 0$. Inserting (\ref{rozwiazanie_naM}) into (\ref{Typ_II_pm_pm_jedyne_rownanie}) and expressing the equation in terms of $w$ and $t$ only one obtains
\begin{equation}
\label{przerobione_rownanie}
36 \mu_{0} ( t^{\frac{3}{2}} \partial_{t} (t^{-\frac{1}{2}} S) + w^{\frac{3}{2}} m_{w} ) = -( S + w^{\frac{1}{2}} m)^{3} + w a  S + w^{\frac{3}{2}} (a m + b)
\end{equation}
Differentiating (\ref{przerobione_rownanie}) with respect to $t$ and $w$ one finds
\begin{equation}
a(w) = a_{0} w^{-1}, \ b(w) = b_{0} w^{-\frac{3}{2}}, \ m(w) = m_{0} w^{-\frac{1}{2}}
\end{equation}
with such form of $m$ the constant $m_{0}$ can be incorporated into $S(t)$ what proves (\ref{rozwiazanie_na_M}). Eq. (\ref{przerobione_rownanie}) takes the form (\ref{rownanie_Abela_symetria}). Finally, the conditions (\ref{Typ_II_pm_pm_jedyne_warunki}) reduce to (\ref{Typ_II_pm_pm_jedyne_warunki_symetria_homotetyczna}).
\end{proof}

\begin{Uwaga}
\normalfont
Note, that in this case the Abel equation (\ref{rownanie_Abela_symetria}) is separable and a solution of $t$ in terms of $S$ can be found. The metric (\ref{metryka_TypII_pm_pm_ostateczna}) can be written down explicitly. Denoting $S=u$ and treating $u$ as a new variable instead of $w$ one finds
\begin{eqnarray}
\frac{1}{2} ds^{2} &=& \bigg\{ -dp dx - \frac{ \chi_{0}}{81 \mu_{0}^{2}} e^{-\frac{2\chi_{0}q}{3} } \frac{f}{T}   \left( dp +  e^{\frac{4\chi_{0}q}{3} } T^{2} \left( \frac{2\chi_{0}}{3}   dq + \frac{18 \mu_{0} }{f}  du  \right) dq     \right)
\\
\nonumber
&& \ \ \ \ \ \ + \left(  \mu_{0} x^{3}  -\frac{1}{2} x^{2} e^{-\frac{2\chi_{0}q}{3} } \frac{u}{T}  - \frac{x}{2T^{2}} e^{-\frac{4\chi_{0}q}{3} } \left( 1 - \frac{3u^{2} - a_{0} +18 \mu_{0}}{18 \mu_{0}}  \right)  \right) dp^{2} \bigg\}
\end{eqnarray}
where
\begin{equation}
T(u) = \exp \left( 18 \mu_{0} \int \frac{du}{f(u)} \right), \ f(u) := -u^{3} + (18 \mu_{0} + a_{0}) u + b_{0}
\end{equation}

\end{Uwaga}

A symmetry algebra with more then 2 dimensions is not possible within the type $\{ [\textrm{D}]^{ee} \otimes [\textrm{II}]^{e}, [+-,+-] \}$. Indeed, from Lemma \ref{Lemat_o_nieistnieniu_Killinga} it follows that no Killing vector other then $K_{1} = \partial_{p}$ is admitted and an arbitrary space admits only one proper homothetic vector.

\subsection{Type $\{ [\textrm{D}]^{ee} \otimes [\textrm{II}]^{e}, [+-,--] \}$}
\label{Sekcja_typ_II_pm_mm}

In this Section we deal with the case for which
\begin{equation}
\label{warunki_Typ_II_pm_mm}
C_{y} = 0,  \ A_{yyy} \ne 0 
\end{equation}

\subsubsection{General case}

\begin{Twierdzenie}
\label{Twierdzenie_typ_II_pm_mm}
Let $(\mathcal{M}, ds^{2})$ be an Einstein complex space of the type $\{ [\textrm{D}]^{ee} \otimes [\textrm{II}]^{e}, [+-,--] \}$. Then there exists a local coordinate system $(q,p,x,w)$ such that the metric takes the form 
\begin{equation}
\label{metryka_TypII_pm_mm_ostateczna}
\frac{1}{2} ds^{2} = x^{-2} \bigg\{ -dpdx + \left( \mu_{0} x^{3} + \frac{\Lambda}{6} - w x^{2}  \right) dp^{2} - \frac{F_{w}}{w (q+F)^{2}} \, dq dw \bigg\}
\end{equation}
where $\mu_{0}=1$, $\Lambda$ is the cosmological constant, $F=F(w)$ is an arbitrary holomorphic function such that $F_{w} \ne 0$. 
\end{Twierdzenie}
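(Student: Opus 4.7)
The strategy is to use the sub-type conditions to reduce the HH system to a single PDE, promote $w := A_{yy}$ to a new coordinate, recognise the resulting equation as the Liouville equation in disguise, and spend one of its two integration functions via the residual gauge.

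Specifically, the conditions $C_y = 0$ and $A_{yyy} \neq 0$ (from Table \ref{Kryteria_krzywizy_ASD} and (\ref{warunek_na_podtyp_pm_mm})) make (\ref{HH_resztki_2}) trivially satisfied and reduce (\ref{HH_resztki_1}) to $A_y^2 - 2AA_{yy} + A_{yq} = 3\mu_0 C(q)$. Differentiating once in $y$ kills $C$ and yields, with $w := A_{yy}$, the characteristic equation $w_q = 2A\,w_y$, in which $w_y = A_{yyy} \neq 0$. Inverting $w(q,y)$ to $y = Y(q,w)$, an implicit-function calculation combined with this characteristic gives $Y_q = -2A$ and $Y_w = 1/A_{yyy}$. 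The identity
\begin{equation*}
dq\,dy + 2A\,dq^2 = dq\,(dy + 2A\,dq) = Y_w\,dq\,dw
\end{equation*}
then collapses the metric (\ref{metryka_HH_Dee_x_dege_twistfree}) (with $C_y = C_{yy} = 0$) into
\begin{equation*}
\frac{1}{2}ds^2 = x^{-2}\Big\{-dp\,dx + \Big(\mu_0 x^3 - w x^2 + \frac{\Lambda}{6}\Big)dp^2 + Y_w\,dq\,dw\Big\}.
\end{equation*}

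It remains to determine $Y_w$. Re-expressing the residual condition $A_{yy} = w$ in the $(q,w)$ coordinates via $\partial_y = Y_w^{-1}\partial_w$ and $A = -Y_q/2$ gives, after a brief calculation,
\begin{equation*}
\partial_w\!\left(\frac{Y_{qw}}{Y_w}\right) = -2w\, Y_w.
\end{equation*}
Setting $U := \ln Y_w$ and substituting $\tilde w := -w^2$ (so $\partial_w = -2w\,\partial_{\tilde w}$) converts this into the classical Liouville equation $\partial_q\partial_{\tilde w}U = e^U$, whose general holomorphic solution is $e^U = 2\alpha_q(q)\beta_{\tilde w}(\tilde w)/(\alpha + \beta)^2$ for arbitrary holomorphic $\alpha, \beta$. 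Writing $F(w) := \beta(-w^2)$ returns
\begin{equation*}
Y_w = -\frac{\alpha_q(q)\, F_w(w)}{w\bigl(\alpha(q) + F(w)\bigr)^2}.
\end{equation*}
The surviving $q$-reparametrisation gauge (\ref{gauge})---$q \mapsto q'(q)$ with $f = dq'/dq$---leaves $w$ invariant and rescales $Y_w \mapsto Y_w/f$, so that $Y_w\,dq\,dw$ is preserved. Choosing $q' = \alpha(q)$ absorbs the prefactor $\alpha_q$ and delivers the target coefficient $-F_w/(w(q+F)^2)$, proving (\ref{metryka_TypII_pm_mm_ostateczna}); the non-degeneracy $A_{yyy} \neq 0$ corresponds to $F_w \neq 0$.

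The main technical hurdle is verifying that $\partial_w(Y_{qw}/Y_w) = -2w\,Y_w$ is the complete content of the HH system, so that Liouville's formula really parameterises every local solution. This reduces to noting that $A_y^2 + A_{yq} - 2Aw$ is automatically $y$-independent once $w_q = 2A\,w_y$ holds (its $y$-derivative is $A_{yyq} - 2A\,w_y = 0$), so the undifferentiated part of (\ref{HH_resztki_1}) merely prescribes $C(q) = (A_y^2 + A_{yq} - 2Aw)/(3\mu_0)$; since $C$ does not appear in the metric (\ref{metryka_HH_Dee_x_dege_twistfree}), this imposes no hidden constraint on $Y$. Once this is granted, the rest of the argument is a routine coordinate change plus a gauge fixing.
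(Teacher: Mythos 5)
Your proof is correct and follows essentially the same route as the paper: differentiate (\ref{HH_resztki_1}) in $y$, promote $w=A_{yy}$ to a coordinate with $2A=-Y_{q}$, reduce the consistency condition to $\partial_w(Y_{qw}/Y_w)=-2wY_w$, solve it by the Liouville formula, and absorb the $q$-dependent prefactor by reparametrising $q$. The only additions are that you make explicit the substitution $\tilde w=-w^2$, $U=\ln Y_w$ reducing it to the standard Liouville equation, and the check that the undifferentiated field equation only fixes $C(q)$ (which drops out of the metric) --- both of which the paper leaves implicit.
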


\begin{proof}
If $C_{y}=0$ then Eq. (\ref{HH_resztki_2}) is identically satisfied. Differentiating Eq. (\ref{HH_resztki_1}) with respect to $y$ and multiplying the result by $dy \wedge dq$ one gets
\begin{equation}
\label{rrrownanie_2}
-2A \, dA_{yy} \wedge dq + dy \wedge dA_{yy} = 0
\end{equation}
Analogously like in the proof of Theorem \ref{Twierdzenie_typ_II_pm_pm} we change the coordinates according to the formulas
\begin{equation}
\label{transformacja_coordiantes_prostrzy_II}
(q,y) \rightarrow (q', w): \ \ q'=q, \ w = A_{yy} (q,y) \ \ \rightarrow \ \ y = Y (q',w)
\end{equation}
Hence, from (\ref{rrrownanie_2}) one finds
\begin{equation}
\label{wzor_na_A}
2A = - Y_{q'}
\end{equation}
The consistency condition between (\ref{wzor_na_A}) and $w=A_{yy}$ gives
\begin{equation}
\label{roownanie_na_Y}
2wY_{w} = - \partial_{w} \left(  \frac{Y_{q'w}}{Y_{w}}   \right) 
\end{equation}
Eq. (\ref{roownanie_na_Y}) is the Liouville differential equation with a general solution
\begin{equation}
\label{rozwiazanie_na_Y}
Y_{w} = - \frac{F_{w} Q_{q'}}{w (Q+F)^{2}} , \ F=F(w), \ Q=Q(q')
\end{equation}
The metric depends on $A$ which is given in terms of $Y_{q'}$. Fortunately, terms with $Y_{q'}$ cancel in the metric. Thus, $Y_{w}$ is all we need to write down the metric. Moreover, in the metric a factor $Q_{q'} dq' = dQ$ appears. Denoting $Q$ by $q$ we arrive at (\ref{metryka_TypII_pm_mm_ostateczna}). Type-[II] condition yields $F_{w} \ne 0$. 
\end{proof}

\subsubsection{Solution with a 2D symmetry algebra}

To prove that a space of the type $\{ [\textrm{D}]^{ee} \otimes [\textrm{II}]^{e}, [+-,--] \}$ admits at most 2D algebra of infinitesimal symmetries we follow the pattern from Section \ref{Typ_II_pm_pm_2D}. First we prove that no other Killing vectors except $K_{1} = \partial_{p}$ are admitted and then we consider a case which admits a proper homothetic vector.
\begin{Lemat}
\label{Lemat_o_nieistnieniu_Killinga_typ_II_pmmm}
A complex Einstein space of the type $\{ [\textrm{D}]^{ee} \otimes [\textrm{II}]^{e}, [+-,--] \}$ which metric
is given in the general form (\ref{metryka_TypII_pm_mm_ostateczna}) does not admit any Killing vector other than $K_{1} = \partial_{p}$. 
\end{Lemat}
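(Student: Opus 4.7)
The plan is to parallel the argument of Lemma~\ref{Lemat_o_nieistnieniu_Killinga}, adapted to the specialization $C_y = 0$. I would assume a second Killing vector $K_2$ exists, so that the homothetic factor $\chi_0$ in (\ref{homotetyczny_ogolny}) vanishes; by the discussion at the end of Section~\ref{Symmetries_general_approach} the constant $\widetilde{c}_0$ can be absorbed into $K_1$, leaving $K_2$ determined by the functions $\widetilde{a}(q)$ and $\widetilde{\epsilon}(q)$. The analysis then splits according to whether $\widetilde{a}$ vanishes identically.

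First I would dispose of the case $\widetilde{a} \equiv 0$. The derivatives $\widetilde{a}_q$ and $\widetilde{a}_{qq}$ then also vanish, equation (\ref{zredukowane_master_2}) is trivially satisfied because $C_y = 0$, while (\ref{zredukowane_master_1}) collapses to $\widetilde{\epsilon} A_y + \tfrac{1}{2}\widetilde{\epsilon}_q = 0$. A nonzero $\widetilde{\epsilon}$ would force $A_y$ to depend only on $q$, hence $A_{yyy} \equiv 0$, contradicting the type-[II] hypothesis (\ref{warunki_Typ_II_pm_mm}). Thus $\widetilde{\epsilon} = 0$ as well, which together with $\widetilde{a} = 0$, $\widetilde{c}_0 = 0$ and $\chi_0 = 0$ yields $K_2 \equiv 0$, a contradiction.

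In the remaining case $\widetilde{a} \neq 0$, I would use the gauge transformations (\ref{symmetry_transformation_a})--(\ref{symmetry_transformation_epsilon}) to normalize $\widetilde{a} = 1$ and $\widetilde{\epsilon} = 0$, so that $K_2 = \partial_q$. The symmetry equations (\ref{zredukowane_master}) then reduce to $A_q = 0$ and $C_q = 0$, giving $A = A(y)$ and $C$ a constant. Substituting into the residual $\mathcal{HH}$-equation (\ref{HH_resztki_1}) yields the ODE $A_y^2 - 2 A A_{yy} = 3 \mu_0 C_0$, and differentiating once in $y$ produces $A\, A_{yyy} = 0$; by holomorphicity this forces $A_{yyy} \equiv 0$, again contradicting (\ref{warunki_Typ_II_pm_mm}).

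The main obstacle here is minor: the whole argument hinges on propagating the type-[II] inequality $A_{yyy} \neq 0$ through both branches, and the only subtlety is confirming that the gauge normalization $\widetilde{a} = 1$, $\widetilde{\epsilon} = 0$ in the second branch is genuinely achievable, which follows exactly as in the proof of Lemma~\ref{Lemat_o_nieistnieniu_Killinga} by the residual freedom encoded in (\ref{symmetry_transformation}).
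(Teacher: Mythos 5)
Your proposal is correct and follows essentially the same route as the paper: the same case split on $\widetilde{a}=0$ versus $\widetilde{a}\neq 0$, the same gauge normalization to $K_{2}=\partial_{q}$ via (\ref{symmetry_transformation}), and the same use of the field equation (\ref{HH_resztki_1}) to force $A_{yyy}=0$ against (\ref{warunki_Typ_II_pm_mm}). The only cosmetic difference is that the paper runs the argument for a general homothetic $K_{2}$ and concludes that any second symmetry must be a proper homothety ($\chi_{0}\neq 0$), whereas you impose $\chi_{0}=0$ from the outset since the lemma concerns only Killing vectors.
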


\begin{proof}
Assume first that the second homothetic vector $K_{2}$ exists. Assume also that $\widetilde{a}=0$. Then from (\ref{zredukowane_master_1}) one finds that $\chi_{0} = \widetilde{\epsilon}=0$ (what implies that $K_{2}=0$) or $A_{yy}=0$ (what implies that a space is not type-[II] anymore). Both these possibilities are contradictions. Thus, $\widetilde{a} \ne 0$ holds true and without any loss of generality one puts $\widetilde{a}=1$ and $\widetilde{\epsilon}=0$ (compare (\ref{symmetry_transformation})). Hence, $K_{2}$ has the form (\ref{drugi_homotetyczny_II_pm_pm}). If $\chi_{0} =0$ then Eq. (\ref{zredukowane_master_1}) yields $A_{q}=0$ and from the field equation (\ref{HH_resztki_1}) we find that $A_{yyy} =0$ what is a contradiction. Thus, $\chi_{0} \ne 0$ and $K_{2}$ is a proper homothetic vector.
\end{proof}

\begin{Twierdzenie}
Let $(\mathcal{M}, ds^{2})$ be an Einstein complex space of the type $\{ [\textrm{D}]^{ee} \otimes [\textrm{II}]^{e}, [+-,--] \}$ equipped with a 2D symmetry algebra $A_{2,1}$. Then there exists a local
coordinate system $(q,p,x,w)$ such that the metric takes the form (\ref{metryka_TypII_pm_mm_ostateczna}) with $\Lambda=0$ and 
\begin{equation}
\label{rozwiazanie_na_F}
F(w) = \frac{3}{2 \chi_{0}} \ln w
\end{equation}
\end{Twierdzenie}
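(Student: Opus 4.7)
The plan is to follow the pattern of Theorem \ref{Twierdzenie_typ_II_pm_pm_2Dalgebra}. By Lemma \ref{Lemat_o_nieistnieniu_Killinga_typ_II_pmmm}, the second generator is necessarily a proper homothetic vector, and the argument in that lemma normalizes it to $\widetilde{a}=1$, $\widetilde{\epsilon}=0$, $\chi_0\ne 0$, yielding
\[K_2 = \partial_q + \tfrac{2}{3}\chi_0(2p\partial_p - x\partial_x + y\partial_y)\]
in the original hyperheavenly frame. The integrability condition (\ref{symmetry_integrability_condition}) then forces $\Lambda = 0$.

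Next I would transport $K_2$ into the coordinate system of Theorem \ref{Twierdzenie_typ_II_pm_mm}, obtained through $w = A_{yy}(q,y)$ followed by the relabeling $q \leftarrow Q(q')$ from the Liouville solution (\ref{rozwiazanie_na_Y}). The symmetry equation (\ref{zredukowane_master_1}) with $\widetilde{a}=1$, $\widetilde{\epsilon}=0$ gives $K_2 A = \tfrac{2}{3}\chi_0 A$, and differentiating twice in $y$ using $[K_2,\partial_y]=-\tfrac{2}{3}\chi_0\partial_y$ produces $K_2 w = -\tfrac{2}{3}\chi_0 w$. The residual constant rescaling $q\to q/c_1$, $F\to F/c_1$---which preserves (\ref{metryka_TypII_pm_mm_ostateczna}) and is compatible with the hyperheavenly gauge (\ref{gauge})---can be used to normalize the coefficient of $\partial_q$ in $K_2$ to unity, and the residual shift $q\to q+\mathrm{const}$ absorbs the integration constant in $Q(q')$. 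Hence in the final frame
\[K_2 = \partial_q + \tfrac{2}{3}\chi_0(2p\partial_p - x\partial_x - w\partial_w).\]

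I would then impose $\mathcal{L}_{K_2} g = 2\chi_0 g$ directly on (\ref{metryka_TypII_pm_mm_ostateczna}). The prefactor satisfies $K_2 x^{-2} = \tfrac{4}{3}\chi_0 x^{-2}$, so the bracketed piece must Lie-derive with weight $\tfrac{2}{3}\chi_0$. With $\Lambda=0$ the $-dp\,dx$ and $(\mu_0 x^3 - wx^2)\,dp^2$ terms satisfy this automatically, and the full condition collapses to $K_2 \phi = \tfrac{4}{3}\chi_0 \phi$ for $\phi(q,w) := -\tfrac{F_w}{w(q+F)^2}$. A direct expansion, followed by multiplication by $\tfrac{3(q+F)^3}{2F_w}$, yields
\[\frac{3}{w} - 2\chi_0 F_w + \chi_0(q+F)\!\left[\frac{F_{ww}}{F_w} + \frac{1}{w}\right] = 0.\]
Since $F$ does not depend on $q$, both brackets vanish separately: $F_{ww}/F_w + 1/w = 0$ integrates to $F = C\ln w + \mathrm{const}$, and $3/w = 2\chi_0 F_w$ fixes $C = 3/(2\chi_0)$. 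The additive constant is absorbed by a final shift in $q$, giving (\ref{rozwiazanie_na_F}).

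The main obstacle is the second step: tracking how $K_2$ transforms through the chain of coordinate changes leading from the original hyperheavenly frame to $(q,p,x,w)$, and exhibiting the residual gauge moves that normalize the prefactor of $\partial_q$ to unity without spoiling the form of (\ref{metryka_TypII_pm_mm_ostateczna}). Once $K_2$ takes the stated form in the final chart, the Lie-derivative calculation in step three is short, and separation of the $q$- and $w$-dependences forces the claimed logarithmic $F$.
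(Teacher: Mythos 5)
Your overall strategy is sound and your final computation checks out: with $K_2=\partial_q+\tfrac{2}{3}\chi_0(2p\partial_p-x\partial_x-w\partial_w)$ acting on (\ref{metryka_TypII_pm_mm_ostateczna}), the only nontrivial condition is indeed $K_2\phi=\tfrac{4}{3}\chi_0\phi$ for $\phi=-F_w/(w(q+F)^2)$, and your separated equation and its solution $F=\tfrac{3}{2\chi_0}\ln w$ are correct (as is $\Lambda=0$ from (\ref{symmetry_integrability_condition})). The genuine gap is exactly the step you flag but do not close: the form of $K_2$ in the final chart. In the intermediate chart $(q',w)=(q,A_{yy})$ you correctly get $K_2q'=1$ and $K_2w=-\tfrac{2}{3}\chi_0 w$, but the final coordinate of Theorem \ref{Twierdzenie_typ_II_pm_mm} is $q=Q(q')$, so the $\partial_q$-component of $K_2$ there is $Q_{q'}$, a priori a nonconstant function. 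Your proposed fix --- a \emph{constant} rescaling $q\to q/c_1$, $F\to F/c_1$ --- only normalizes the coefficient if it is already constant, which is precisely what needs to be proved. The fix is cheap: carry a general coefficient $\beta(q)\partial_q$ into your Lie-derivative computation; the extra terms are $\beta\phi_q+\beta_q\phi$ in place of $\phi_q$, and the same separation argument (the $1/(q+F)$ terms cannot cancel against functions of $q$ plus functions of $w$) forces $\beta=\tfrac{2}{3}\chi_0\,wF_w=\mathrm{const}$, after which your constant rescaling applies. Without this, the argument as written assumes its key intermediate claim.

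For comparison, the paper avoids the issue by never leaving the function $Y$: it writes the symmetry condition (\ref{rownanie_master_dwie_symetrie_w_1}) in the $(q',w)$ chart, integrates it to $Y_{q'}-\tfrac{2}{3}\chi_0 Y=\alpha(w)Y_w$, and then differentiates in $w$ and substitutes the Liouville solution (\ref{rozwiazanie_na_Y}); the relation $\alpha F_w=1$ (equivalently $Q_{q'}=\mathrm{const}$) drops out of that computation rather than being assumed. Your route --- imposing $\mathcal{L}_{K_2}g=2\chi_0 g$ directly on the final metric --- is more self-contained and arguably more transparent about where each term of the metric constrains $F$, but it obliges you to control the coordinate form of $K_2$ in the final chart, which is where the missing argument lives.
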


\begin{proof}
From Lemma \ref{Lemat_o_nieistnieniu_Killinga_typ_II_pmmm} it follows that the second symmetry is generated by the proper homothetic vector $K_{2}=\partial_{q}  + \frac{2}{3} \chi_{0} (2p \partial_{p} - x \partial_{x} + y \partial_{y} )$. The commutation rule reads $[K_{1}, K_{2}] = \frac{4 \chi_{0}}{3} K_{1}$ what implies that a symmetry algebra is 2D non-abelian $A_{2,1}$. Eq. (\ref{zredukowane_master_1}) takes the form (\ref{rownanie_master_dwie_symetrie_1}) or - in the coordinate system $(q,p,x,w)$ - (\ref{rownanie_master_dwie_symetrie_w_1}). It yields
\begin{equation}
\label{dodatkow_na_Y}
Y_{q'} - \frac{2}{3}\chi_{0} Y = \alpha(w) \, Y_{w}
\end{equation}
where $\alpha(w)$ is an arbitrary function. Differentiating (\ref{dodatkow_na_Y}) with respect to $w$ and feeding the result with (\ref{rozwiazanie_na_Y}) (remember that $Q=q'$) one finds $\alpha F_{w}=1$ and, consequently
\begin{equation}
F(w) = \frac{3}{2 \chi_{0}} \ln w + F_{0}
\end{equation}
The constant $F_{0}$ can be absorbed into $q$. Hence, (\ref{rozwiazanie_na_F}) is proved. 
\end{proof}

% &&&&&&&&&&&&&&&&&&&&&&&&&&&&&&&&&&&&&&&&&&&&&&&&&&&&&&&&&&&&&&&&&&&&&&&&&&&&&&&&&&&&&&&&&&&&&&&&&

\section{Types $[\textrm{D}]^{ee} \otimes [\textrm{D}]^{ee}$}
\label{section_Dee_x_Dee}
\setcounter{equation}{0}

In this Section we focus on the case for which type-[D] conditions are satisfied
\begin{subequations}
\label{type_D_roownanie}
\begin{eqnarray}
\label{type_D_roownanie_1}
&& A_{yyyy} = 0
\\
\label{type_D_roownanie_2}
&& 2A_{yyy}^{2} +3 \mu_{0} C_{yyyy} =0
\end{eqnarray}
\end{subequations}

\begin{Lemat}
The key function $W$ which satisfies type-[D] conditions (\ref{type_D_roownanie}) and field equations (\ref{HH_resztki}) has the form
\begin{eqnarray}
\label{W_dla_typu_D_eexee}
W &=& (a y^{3} + b y^{2} + d y + e) \, x  
\\ \nonumber
&& +\frac{1}{3 \mu_{0}} (-3 a^{2} y^{4} -4ab y^{3} +(3 a_{q} -6ad) y^{2} +(2b_{q} -12 ae) y -4be + d^{2} + d_{q})
\end{eqnarray}
where $\mu_{0}=1$; $a$, $b$, $d$ and $e$ are functions of $q$ only, which satisfy the equations
\begin{subequations}
\label{zredukowane_rownania_pola_typ_D}
\begin{eqnarray}
\partial_{q} (2b^{2} + 3 a_{q} - 6ad) = 0
\\
b_{qq} - 12ea_{q} - 6ae_{q} + 2d b_{q} = 0
\end{eqnarray}
\end{subequations}
\end{Lemat}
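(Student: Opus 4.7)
The plan is to reduce the problem to pure ODE bookkeeping by exploiting the fact that both type-[D] conditions force $A$ and $C$ to be polynomials in $y$ of small degree. First, integrating $A_{yyyy}=0$ four times in $y$ gives
\begin{equation*}
A(q,y) = a(q) y^{3} + b(q) y^{2} + d(q) y + e(q),
\end{equation*}
so $A_{yyy}=6a$ and the second type-[D] condition (\ref{type_D_roownanie_2}) becomes $C_{yyyy} = -24 a^{2}/\mu_{0}$, which confines $C$ to a polynomial of degree at most $4$ in $y$.

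Rather than integrate $C_{yyyy}$ directly (which would introduce four undetermined functions of $q$), I would use the first field equation (\ref{HH_resztki_1}) as an explicit formula for $C$, namely
\begin{equation*}
3\mu_{0}\, C = A_{y}^{2} - 2 A A_{yy} + A_{yq}.
\end{equation*}
Expanding the right-hand side with the polynomial ansatz for $A$ gives a polynomial of degree $4$ in $y$ whose coefficients depend on $a,b,d,e$ and their first $q$-derivatives. A direct expansion reproduces precisely the formula (\ref{W_dla_typu_D_eexee}). Consistency with $C_{yyyy} = -24 a^{2}/\mu_{0}$ is automatic, since the coefficient of $y^{4}$ on the right-hand side is $-3a^{2}$.

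It only remains to impose the second field equation (\ref{HH_resztki_2}) on the resulting $C$. The left-hand side is a polynomial in $y$ of degree at most $5$. I expect the coefficients of $y^{5}$, $y^{4}$, $y^{3}$ and $y^{2}$ to vanish identically as consequences of the algebraic identities already built into (\ref{HH_resztki_1}); indeed, $C_{yy}$ and $C_{y}$ involve the same combinations of $a,b,d,e$ that produce the cancellations. The remaining coefficients, of $y^{1}$ and $y^{0}$, then yield exactly the two ODE constraints in (\ref{zredukowane_rownania_pola_typ_D}): the $y^{1}$ coefficient is proportional to $\partial_{q}(2b^{2}+3a_{q}-6ad)$, and the $y^{0}$ coefficient gives $b_{qq}-12 e a_{q} - 6 a e_{q} + 2 d b_{q}=0$.

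The only real obstacle is the bookkeeping in the last step --- collecting the coefficients of $y^{0}$ through $y^{5}$ in the product of three polynomials and verifying that the four top coefficients cancel without extra constraints. This is mechanical but needs to be done carefully, since $(ab)_{q}$, $(ad)_{q}$ and $(ae)_{q}$ must be split via the Leibniz rule so that the $q$-derivatives coming from $C_{yq}$ combine cleanly with the non-derivative contributions from $-2 A C_{yy} + 2 A_{y} C_{y}$.
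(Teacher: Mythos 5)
Your proposal is correct and follows essentially the same route as the paper: integrate $A_{yyyy}=0$ to a cubic in $y$, read off $C$ from $3\mu_{0}C=A_{y}^{2}-2AA_{yy}+A_{yq}$ (which makes the second type-[D] condition automatic, since the $y^{4}$ coefficient is indeed $-3a^{2}$), and reduce (\ref{HH_resztki_2}) to the two ODEs. The expansion you defer does work out exactly as you predict --- the $y^{5},\dots,y^{2}$ coefficients cancel identically and the $y^{1}$ and $y^{0}$ coefficients give $2\,\partial_{q}(2b^{2}+3a_{q}-6ad)$ and $2(b_{qq}-12ea_{q}-6ae_{q}+2db_{q})$ respectively --- so nothing is missing.
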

\begin{proof}
From (\ref{type_D_roownanie_1}) we find
\begin{equation}
\label{postac_A_typ_D}
A (q,y) = a y^{3} + b y^{2} + d y + e
\end{equation}
where $a$, $b$, $d$ and $e$ are arbitrary functions of $q$ only. From the field equation (\ref{HH_resztki_1}) we find the form of $C$
\begin{equation}
\label{postac_C_typ_D}
3 \mu_{0} C = -3 a^{2} y^{4} -4ab y^{3} +(3 a_{q} -6ad) y^{2} +(2b_{q} -12 ae) y -4be + d^{2} + d_{q}
\end{equation}
Eq. (\ref{type_D_roownanie_2}) is identically satisfied by virtue of (\ref{postac_A_typ_D}) and (\ref{postac_C_typ_D}). After some straightforward calculations one finds that the field equation (\ref{HH_resztki_2}) reduces to the system (\ref{zredukowane_rownania_pola_typ_D}). Finally, inserting (\ref{postac_A_typ_D}) and (\ref{postac_C_typ_D}) into (\ref{funkcja_kluczowa_postac_wyjsciowa_uproszczona}), the form (\ref{W_dla_typu_D_eexee}) is proved.
\end{proof}
Under (\ref{gauge}) functions $a$, $b$, $d$ and $e$ transform as follows
\begin{subequations}
\label{transformacje_na_funkcje_abcd}
\begin{eqnarray}
\label{transformacje_na_funkcje_abcd_a}
a' &=& f a
\\ 
\label{transformacje_na_funkcje_abcd_b}
b' &=& b-3a' \sigma
\\ 
\label{transformacje_na_funkcje_abcd_d}
d' &=& \frac{d}{f} + \frac{f_{q}}{2 f^{2}} - 3a' \sigma^{2} - 2b' \sigma
\\ 
\label{transformacje_na_funkcje_abcd_e}
e' &=& \frac{e}{f^{2}} - \frac{\sigma_{q}}{2f} - a' \sigma^{3}-b' \sigma^{2} - d' \sigma
\end{eqnarray}
\end{subequations}

Before we solve Eqs. (\ref{zredukowane_rownania_pola_typ_D}), we establish the properties of additional ASD $\mathcal{C}$ and $\mathcal{I}s$ which automatically appear when the ASD Weyl tensor is type-[D] (See Figure \ref{Congruences_4}).

\begin{figure}[ht]
\begin{center}
\includegraphics[scale=0.9]{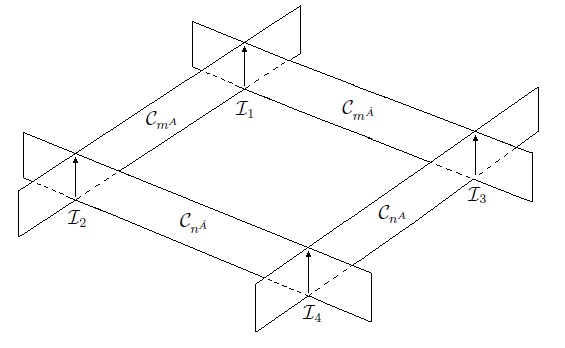}
\caption {Congruences of null strings and congruences of null geodesics in type-[D] para-Hermite Einstein spaces.}
\label{Congruences_4}
\end{center}
\end{figure}

The ASD Weyl spinor (\ref{spinorowa_postac_ASD_Weyl_1}) takes the form 
\begin{equation}
\label{spinorowa_postac_ASD_Weyl_2}
C_{\dot{A}\dot{B}\dot{C}\dot{D}} = -6 \mu_{0} x^{3} \, n_{(\dot{A}} n_{\dot{B}} k_{\dot{C}} k_{\dot{D})}, \ n_{\dot{A}} = l_{\dot{A}} - \frac{2a}{\mu_{0}} k_{\dot{A}}
\end{equation}
If we chose a spinorial basis is such a manner that $l_{\dot{A}} = [i,0]$, $k_{\dot{A}} = [0, i]$, the second double Penrose spinor $n_{\dot{A}}$ reads
\begin{equation}
n_{\dot{A}} = i \left[ 1, - \frac{2a}{\mu_{0}} \right]
\end{equation}
The expansion of $\mathcal{C}_{n^{\dot{A}}}$ is given by the formulas\footnote{The expansion of an ASD congruence of null strings generated by a spinor $n_{\dot{A}}$ such that $n_{\dot{A}} \sim [1, w]$ was calculated in \cite{Chudecki_Ref_2}. It is enough to write formulas (2.38a) and (2.38b) of \cite{Chudecki_Ref_2} for $\phi = x$ and $w = -2a / \mu_{0}$ to prove that (\ref{expansja_czwartej_struny}) is correct.}
\begin{equation}
\label{expansja_czwartej_struny}
N_{1} = \frac{\sqrt{2}}{x} \frac{2a}{\mu_{0}}, \ N_{2} = \sqrt{2} x \left( \mathcal{A} - \frac{2a}{\mu_{0}} \mathcal{Q} \right)
\end{equation}
Because $\mu_{0} \ne 0$ then $N_{2}$ is always nonzero. Thus, $\mathcal{C}_{n^{\dot{A}}}$ is always expanding. 

From (\ref{wlasnosci_przeciec_po_zmianach_I3}) it follows that $\varrho_{3} = 0$ and $\theta_{3} \sim C_{y}$. However, $C_{y} = 0 \Longleftrightarrow a=0$ (it follows from (\ref{postac_C_typ_D}) and (\ref{zredukowane_rownania_pola_typ_D})). Hence
\begin{eqnarray}
\label{wlasnosci_typ_D_I3}
p(\mathcal{I}_{3}) = [+-] \ &\textrm{for}& \ a \ne 0
\\ \nonumber
p(\mathcal{I}_{3}) = [--] \ &\textrm{for}& \ a = 0
\end{eqnarray}
Properties of $\mathcal{I}_{2}$ and $\mathcal{I}_{4}$ are given by the formulas
\begin{eqnarray}
&& \theta_{2}  \sim m_{A} N^{A} + n_{\dot{A}} M^{\dot{A}}, \ \ \varrho_{2} \sim m_{A} N^{A} - n_{\dot{A}} M^{\dot{A}}
\\ \nonumber
&& \theta_{4} \sim n_{A} N^{A} + n_{\dot{A}} N^{\dot{A}}, \ \ \varrho_{4} \sim n_{A} N^{A} - n_{\dot{A}} N^{\dot{A}}
\end{eqnarray}
Hence, we find that 
\begin{eqnarray}
\label{wlasnosci_typ_D_I2}
p(\mathcal{I}_{2}) = [+-] \ &\textrm{for} \ a \ne 0&
\\ \nonumber
p(\mathcal{I}_{2}) = [--] \ &\textrm{for} \ a = 0 &
\\ \nonumber
p(\mathcal{I}_{4}) = [+-] \ &\textrm{for} \ \textrm{an arbitrary } a &
\end{eqnarray}
Thus, we deal here with the types
\begin{eqnarray}
\nonumber
\{ [\textrm{D}]^{ee} \otimes [\textrm{D}]^{ee},[+-,+-,+-,+-] \} &\textrm{ for }& a \ne 0
\\ \nonumber
\{ [\textrm{D}]^{ee} \otimes [\textrm{D}]^{ee},[+-,--,--,+-] \} &\textrm{ for }& a = 0
\end{eqnarray}

\begin{Twierdzenie}
Let $(\mathcal{M}, ds^{2})$ be an Einstein complex space of the type $\{ [\textrm{D}]^{ee} \otimes [\textrm{D}]^{ee}, [+-,+-,+-,+-]\}$. Then there exists a local coordinate system $(q,p,x,y)$ such that the metric takes the form 
\begin{eqnarray}
\label{metryka_TypD_ostateczna_pmpmpmpm}
\frac{1}{2} ds^{2} &=& x^{-2} \bigg\{  dq dy - dp dx + \left(  \mu_{0} x^{3} +\frac{\Lambda}{6} -6yx^{2} + \frac{4}{\mu_{0}} x (3y^{2} + d_{0})    \right)   dp^{2}
\\ \nonumber
&& \ \ \ \ \ \ - \frac{8}{\mu_{0}} ( y^{3} +d_{0} y + e_{0}  ) 
\,  dpdq  + 2 (y^{3} + d_{0}y + e_{0}) \, dq^{2} \bigg\}
\end{eqnarray}
where $\mu_{0} = 1$, $\Lambda$ is the cosmological constant, $d_{0}$ and $e_{0}$ are constants. If $d_{0} \ne 0$ or $e_{0} \ne 0$ or $\Lambda \ne 0$ then the metric (\ref{metryka_TypD_ostateczna_pmpmpmpm}) is equipped with a 2D  symmetry algebra $2A_{1}$ with generators
\begin{equation}
\label{symmetry_algebra_type_D_2D}
K_{1} = \partial_{p}, \ K_{2} = \partial_{q}, \ [K_{1}, K_{2}]=0
\end{equation}
If $d_{0}=e_{0}=\Lambda=0$ then the metric (\ref{metryka_TypD_ostateczna_pmpmpmpm}) is equipped with a 3D  symmetry algebra $A_{3,3}$ with generators
\begin{eqnarray}
\label{symmetry_algebra_type_D_3D}
&& K_{1} = \partial_{p}, \ K_{2} = \partial_{q}, \ K_{3} = \frac{2}{3}\chi_{0} \, (2p \partial_{p} - x \partial_{x} + 2q \partial_{q} - y \partial_{y} )
\\ \nonumber
&& [K_{1}, K_{2}]=0, \ [K_{1}, K_{3}] = \frac{4}{3} \chi_{0} \, K_{1}, \ [K_{2}, K_{3}] = \frac{4}{3} \chi_{0} \, K_{2} 
\end{eqnarray}
\end{Twierdzenie}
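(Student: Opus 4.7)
I would exhaust the residual gauge freedom (\ref{gauge}) to reduce the four functions $a,b,d,e$ in (\ref{W_dla_typu_D_eexee}) to two constants, then classify homothetic vectors directly from (\ref{zredukowane_master}).

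\textbf{Gauge reduction.} Since the subtype $[+-,+-,+-,+-]$ is characterized by $a\ne 0$ (from (\ref{wlasnosci_typ_D_I3})--(\ref{wlasnosci_typ_D_I2})), I would first apply (\ref{transformacje_na_funkcje_abcd_a}) with $f=1/a$ to normalize $a$ to the constant value $1$; this fixes the $f$-gauge completely. Next, using the remaining $\sigma$-gauge (\ref{transformacje_na_funkcje_abcd_b}) with $\sigma=b/3$ one kills $b$. Inserting $a=1$, $b=0$ into (\ref{zredukowane_rownania_pola_typ_D}), the first equation collapses to $d_q=0$, so $d=d_0$ is a constant, and the second then collapses to $e_q=0$, so $e=e_0$ is a constant. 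Plugging these into the key function (\ref{W_dla_typu_D_eexee}), computing $A_{yy}$, $C_y$, $C_{yy}$ and substituting into the general metric (\ref{metryka_HH_Dee_x_dege_twistfree}) yields (\ref{metryka_TypD_ostateczna_pmpmpmpm}) line by line.

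\textbf{Symmetry classification.} In (\ref{metryka_TypD_ostateczna_pmpmpmpm}) the coefficients do not depend on $q$, so $K_2=\partial_q$ is a Killing vector in addition to $K_1=\partial_p$ guaranteed by (\ref{pierwszy_Killing}). To pin down the full symmetry algebra I would substitute $A=y^3+d_0y+e_0$, $C=\mu_0^{-1}(-y^4-2d_0 y^2-4e_0 y+\tfrac{1}{3}d_0^{2})$, and $A_q=C_q=0$ into (\ref{zredukowane_master_1})--(\ref{zredukowane_master_2}), treat the first equation as a polynomial identity in $y$, and read off coefficients. The $y^3$-coefficient forces $\widetilde{a}_q=\tfrac{4}{3}\chi_0$, the $y^2$-coefficient forces $\widetilde{\epsilon}=0$, and the $y^1$ and $y^0$ coefficients reduce to $\chi_0 d_0=0$ and $\chi_0 e_0=0$, respectively. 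Combined with the integrability condition (\ref{symmetry_integrability_condition}) this yields the dichotomy: either $\chi_0=0$, in which case $\widetilde{a}$ is a constant and the only generators are combinations of $K_1,K_2$; or $\chi_0\ne 0$, which requires $d_0=e_0=\Lambda=0$, and then $\widetilde{a}=\tfrac{4}{3}\chi_0 q$ modulo a constant $\widetilde{a}_0$. Feeding this back into (\ref{homotetyczny_ogolny}) and absorbing the $\widetilde{a}_0\partial_q$ and $\widetilde{c}_0\partial_p$ pieces into $K_1,K_2$ reproduces the vector $K_3$ of (\ref{symmetry_algebra_type_D_3D}); equation (\ref{zredukowane_master_2}) then turns out to be automatically satisfied. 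The commutators $[K_1,K_2]=0$ and $[K_i,K_3]=\tfrac{4}{3}\chi_0 K_i$ identify the algebras as the abelian $2A_1$ and the non-abelian $A_{3,3}$, respectively.

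\textbf{Main obstacle.} The delicate bookkeeping is in the $\sigma$-gauge that kills $b$: it simultaneously modifies $d$ and $e$ through (\ref{transformacje_na_funkcje_abcd_d})--(\ref{transformacje_na_funkcje_abcd_e}), and one must verify the field equations in the transformed quantities before concluding that the redefined $d,e$ are constants. Less mechanical, but equally crucial, is the converse direction of the symmetry statement: showing that no homothetic vector beyond $K_1,K_2$ is admitted as soon as any of $d_0,e_0,\Lambda$ is nonzero. This is precisely what the polynomial-in-$y$ argument above is designed to deliver, and it is the only place where the exceptional nature of the $d_0=e_0=\Lambda=0$ locus becomes visible.
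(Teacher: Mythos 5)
Your proposal is correct and follows essentially the same route as the paper: gauge $a$ to $1$ and $b$ to $0$ via (\ref{transformacje_na_funkcje_abcd_a})--(\ref{transformacje_na_funkcje_abcd_b}), deduce $d=d_0$, $e=e_0$ from (\ref{zredukowane_rownania_pola_typ_D}), then extract $\widetilde{\epsilon}=0$, $\widetilde{a}=\tfrac{4}{3}\chi_0 q+\widetilde{a}_0$ and the obstructions $\chi_0 d_0=\chi_0 e_0=\chi_0\Lambda=0$ from (\ref{zredukowane_master}) together with (\ref{symmetry_integrability_condition}). Your explicit coefficient-by-coefficient reading of the polynomial identity in $y$ is just a more detailed rendering of the step the paper states as the conditions (\ref{warunki_na_symetrie_typ_D_pmpmpmpm}).
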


\begin{proof}
For this class of type-[D] metrics the function $a$ is necessarily nonzero. Thus, it can be brought to the constant value, say, $a=1$, without any loss of generality (compare (\ref{transformacje_na_funkcje_abcd_a})). Similarly, one puts $b=0$ (compare (\ref{transformacje_na_funkcje_abcd_b}). From Eqs. (\ref{zredukowane_rownania_pola_typ_D}) we get $d=d_{0} = \textrm{const}$ and  $e=e_{0} = \textrm{const}$. Consequently,
\begin{equation}
\label{W_dla_typu_D_eexee_ane0}
W = (y^{3}  + d_{0} y + e_{0}) \, x    -  \frac{1}{3 \mu_{0}} (3 y^{4}   +6d_{0} y^{2}  +12 e_{0} y  - d_{0}^{2} )
\end{equation}
and the metric (\ref{metryka_TypD_ostateczna_pmpmpmpm}) is proved.

From Eqs. (\ref{zredukowane_master}) we find that 
\begin{equation}
\label{warunki_na_symetrie_typ_D_pmpmpmpm}
\widetilde{\epsilon}=0, \ \chi_{0} d_{0} = 0 , \ \chi_{0} e_{0} = 0, \ \chi_{0} \Lambda=0, \  \widetilde{a} = \frac{4}{3} \chi_{0} q + \widetilde{a}_{0}
\end{equation}
Thus, any homothetic vector other then $K_{1} = \partial_{p}$ has the form
\begin{equation}
\label{dodatkowy_homotetyczny_dla_typD}
K = \widetilde{a}_{0} \partial_{q} + \frac{2}{3} \chi_{0} \, (2p \partial_{p} -x \partial_{x} + 2q \partial_{q} - y \partial_{y} )
\end{equation}
From (\ref{dodatkowy_homotetyczny_dla_typD}) and (\ref{warunki_na_symetrie_typ_D_pmpmpmpm}) we conclude that the second Killing vector $K_{2} = \partial_{q}$ always exists, but the existence of the third, proper homothetic vector $K_{3} = \frac{2}{3} \chi_{0} \, (2p \partial_{p} -x \partial_{x} + 2q \partial_{q} - y \partial_{y} )$ depends on the conditions (\ref{warunki_na_symetrie_typ_D_pmpmpmpm}). If at least one of the constants $d_{0}$, $e_{0}$ or $\Lambda$ is nonzero, then $\chi_{0}=0$ and the symmetry algebra is 2-dimensional abelian $2A_{1}$. Thus, (\ref{symmetry_algebra_type_D_2D}) is proved. However, if $d_{0}=e_{0}=\Lambda=0$ then $K_{3}$ exists and the symmetry algebra is 3-dimensional. Substituting
\begin{equation}
e_{1} := K_{1}, \ e_{2} := K_{2}, \ e_{3} := \frac{4}{3}\chi_{0} K_{3}
\end{equation}
one finds the commutation rules
\begin{equation}
[e_1, e_2]=0 , \ [e_{1}, e_{3}]=e_1, \ [e_2, e_3]=e_2
\end{equation}
From Table I of \cite{Patera} we find that such commutation rules correspond to the algebra $A_{3,3}$.
\end{proof}

\begin{Twierdzenie}
Let $(\mathcal{M}, ds^{2})$ be an Einstein complex space of the type $\{ [\textrm{D}]^{ee} \otimes [\textrm{D}]^{ee}, [+-,--,--,+-]\}$. Then there exists a local coordinate system $(q,p,x,y)$ such that the metric takes the form 
\begin{eqnarray}
\label{metryka_TypD_ostateczna_pmmmmmpm}
\frac{1}{2} ds^{2} &=& x^{-2} \bigg\{  dq dy - dp dx + \left(  \mu_{0} x^{3} +\frac{\Lambda}{6} -2b_{0} x^{2}   \right)   dp^{2} + 2b_{0} y^{2} \, dq^{2}  \bigg\}
\end{eqnarray}
where $\mu_{0} = 1$, $\Lambda$ is the cosmological constant, $b_{0}$ is a constant. If $b_{0} \ne 0$ then the metric (\ref{metryka_TypD_ostateczna_pmmmmmpm}) is equipped with a 4D symmetry algebra $A_{3,8} \oplus A_{1}$ with generators
\begin{eqnarray}
\label{symmetry_algebra_type_D_2D4}
&& K_{1} = \partial_{p}, \ K_{2} = \partial_{q}, \ K_{3} = q \partial_{q} - y \partial_{y}, \ K_{4} = \partial_{y} + 2b_{0}q (q \partial_{q} - 2y \partial_{y} ),
\\ \nonumber
&& [K_{1}, K_{2}]=0, \ [K_1, K_3]=0, \ [K_1, K_4]=0, \ 
\\ \nonumber
&& [K_2, K_3]=K_2, \ [K_2, K_4] = 4 b_{0} K_3, \ [K_3, K_4]=K_4
\end{eqnarray}
If $b_{0} =0$ and $\Lambda \ne 0$ then the metric (\ref{metryka_TypD_ostateczna_pmmmmmpm}) is equipped with a 4D symmetry algebra $A_{3,4} \oplus A_{1}$ with generators (\ref{symmetry_algebra_type_D_2D4}) written for $b_{0}=0$.

If $b_{0} = 0$ and $\Lambda =0$ then the metric (\ref{metryka_TypD_ostateczna_pmmmmmpm}) is equipped with a 5D symmetry algebra $A_{5,33}^{\frac{1}{2},-1}$ with generators
\begin{eqnarray}
\label{symmetry_algebra_type_D_2D5}
&& K_{1} = \partial_{p}, \ K_{2} = \partial_{q}, \ K_{3} = q \partial_{q} - y \partial_{y}, \ K_{4} = \partial_{y}, \ K_{5} = \frac{2}{3} \chi_{0} (2p \partial_{p} - x \partial_{x} + y \partial_{y}), \ \ \ \ \
\\ \nonumber
&& [K_{1}, K_{2}]=0, \ [K_1, K_3]=0, \ [K_1, K_4]=0, \ [K_1, K_5]=\frac{4}{3} \chi_{0} K_{1},
\\ \nonumber
&& [K_2, K_3]=K_2, \ [K_2, K_4] = 0, \ [K_2, K_5]=0
\\ \nonumber
&& [K_3, K_4]=K_4, \ [K_3, K_5]=0, \ [K_4, K_5] = \frac{2}{3} \chi_{0} K_4
\end{eqnarray}
\end{Twierdzenie}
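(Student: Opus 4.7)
The plan is to follow the same route as in the previous theorem, specialized to the branch $a=0$ which, by (\ref{wlasnosci_typ_D_I3})--(\ref{wlasnosci_typ_D_I2}), characterizes the type $[+-,--,--,+-]$. I would first revisit the reduced field equations (\ref{zredukowane_rownania_pola_typ_D}): with $a=0$ the first collapses to $\partial_q(2b^{2})=0$, forcing $b=b_0$ to be a constant, while the second is then identically satisfied. Thus $d=d(q)$ and $e=e(q)$ remain \emph{a priori} free functions.

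Next I would deploy the gauge transformations (\ref{transformacje_na_funkcje_abcd}). Since $a=0$, relation (\ref{transformacje_na_funkcje_abcd_b}) gives $b'=b$, so $b_0$ is a genuine geometric invariant. The remaining rules reduce to
\[
d'=\frac{d}{f}+\frac{f_q}{2f^{2}}-2b_0\sigma, \qquad e'=\frac{e}{f^{2}}-\frac{\sigma_q}{2f}-b_0\sigma^{2}-d'\sigma,
\]
which allow $d$ and $e$ to be eliminated in sequence: algebraically in $\sigma$ if $b_0\neq 0$, or by solving the linear ODE $f_q=-2df$ followed by a quadrature in $\sigma$ if $b_0=0$. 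Substituting $a=d=e=0$, $b=b_0$ into (\ref{W_dla_typu_D_eexee}) gives $W=b_0\,y^{2}x$, and then (\ref{zwiazek_miedzy_Qab_i_W}) yields $\mathcal{A}=\mu_0 x^{3}+\Lambda/6-2b_0 x^{2}$, $\mathcal{Q}=0$, $\mathcal{B}=2b_0 y^{2}$, so that (\ref{metryka_HH_Dee_x_dege_twistfree}) reduces to (\ref{metryka_TypD_ostateczna_pmmmmmpm}).

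For the symmetry content I would plug $A=b_0 y^{2}$ and $C=0$ into the master system (\ref{zredukowane_master}) and compare powers of $y$. Equation (\ref{zredukowane_master_1}) then delivers three conditions: $2b_0\chi_0/3=0$ (from the $y^{2}$ coefficient), $\widetilde{a}_{qq}+4b_0\widetilde{\epsilon}=0$ (from $y^{1}$) and $\widetilde{\epsilon}_q=0$ (from $y^{0}$), while (\ref{zredukowane_master_2}) gives only $\widetilde{a}_{qqq}=0$, which is automatic from the previous. Combined with the integrability condition (\ref{symmetry_integrability_condition}), these split naturally into the three declared subcases. When $b_0\neq 0$ the first condition forces $\chi_0=0$, and integration of $\widetilde{a}_{qq}=-4b_0\widetilde{\epsilon}_0$ produces a quadratic $\widetilde{a}(q)$ whose four free parameters reproduce $K_1,K_2,K_3$ and $K_4=\partial_y+2b_0 q(q\partial_q-2y\partial_y)$. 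When $b_0=0$ and $\Lambda\neq 0$ one still has $\chi_0=0$, this time enforced by (\ref{symmetry_integrability_condition}); the coupling term $4b_0\widetilde{\epsilon}$ vanishes, $K_4$ degenerates to $\partial_y$ and decouples. When $b_0=0=\Lambda$ the parameter $\chi_0$ becomes free and produces the fifth, proper homothetic generator $K_5=\tfrac{2}{3}\chi_0(2p\partial_p-x\partial_x+y\partial_y)$.

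The final step is to compute the commutators by the standard Lie-bracket formula and match the resulting structure constants against the tables of \cite{Patera}, giving $A_{3,8}\oplus A_1$, $A_{3,4}\oplus A_1$ and $A_{5,33}^{1/2,-1}$ respectively. The main obstacle I anticipate is the bookkeeping around the $b_0\to 0$ transition: the Killing vector $K_4$ degenerates discontinuously from a quadratic-in-$q$ field to the pure translation $\partial_y$, the 3D subalgebra spanned by $K_2,K_3,K_4$ jumps from a simple to a solvable Lie algebra, and one must verify carefully that no spurious additional homothety is admitted in the intermediate case $b_0=0,\Lambda\neq 0$. A secondary difficulty is identifying the 5-dimensional algebra with the precise Patera--Winternitz entry $A_{5,33}^{1/2,-1}$, whose superscripts encode the specific rational weights with which $K_5$ acts on $(K_1,K_2,K_3,K_4)$; this requires checking each pairwise bracket against the listed normal form rather than merely counting dimensions.
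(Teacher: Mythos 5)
Your proposal is correct and follows essentially the same route as the paper: gauge away $d$ and $e$ using (\ref{transformacje_na_funkcje_abcd}), obtain $b=b_{0}=\mathrm{const}$ from (\ref{zredukowane_rownania_pola_typ_D}), read off the metric from $W=b_{0}y^{2}x$, and then solve the master system (\ref{zredukowane_master}) order by order in $y$ to get $\widetilde{\epsilon}=\widetilde{\epsilon}_{0}$, $b_{0}\chi_{0}=0$, $\chi_{0}\Lambda=0$ and the quadratic $\widetilde{a}(q)$, whose free parameters yield exactly the generators $K_{2},K_{3},K_{4}$ (and $K_{5}$ when $b_{0}=\Lambda=0$), after which the identification with $A_{3,8}\oplus A_{1}$, $A_{3,4}\oplus A_{1}$ and $A_{5,33}^{1/2,-1}$ proceeds by rescaling the basis and comparing with the tables of Patera et al., just as in the paper.
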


\begin{proof}
If $a=0$ then both $d$ and $e$ can be gauged away without any loss of generality (compare (\ref{transformacje_na_funkcje_abcd_d})-(\ref{transformacje_na_funkcje_abcd_e})). From Eqs. (\ref{zredukowane_rownania_pola_typ_D}) one finds that $b=b_{0}=\textrm{const}$. Hence, $A=b_{0} y^{2}$ and $C=0$. The key function takes very simple form
\begin{equation}
\label{W_dla_typu_D_eexee_prostrzy}
W =  b_{0} \, y^{2} x  
\end{equation}
It proves (\ref{metryka_TypD_ostateczna_pmmmmmpm}). From Eqs. (\ref{zredukowane_master}) we get the following conditions
\begin{equation}
\widetilde{\epsilon} = \widetilde{\epsilon}_{0} = \textrm{const}, \ b_{0} \chi_{0} = 0, \ \chi_{0} \Lambda=0 , \ \widetilde{a} = -2 b_{0} \widetilde{\epsilon}_{0} q^{2} + m_{0} q + n_{0}
\end{equation}
Thus, any homothetic vector admitted by the metric (\ref{metryka_TypD_ostateczna_pmmmmmpm}) other then $K_{1} = \partial_{p}$ has the form
\begin{equation}
K=n_{0} \partial_{q} + m_{0} (q \partial_{q} -  y \partial_{y}) - \widetilde{\epsilon}_{0} ( 2b_{0} q^{2} \partial_{q} - 4 b_{0} qy \partial_{y} + \partial_{y} ) + \frac{2}{3} \chi_{0} (2p \partial_{p} -x \partial_{x} +y \partial_{y})
\end{equation}
If $b_{0} \ne 0$ then $\chi_{0}=0$ and three additional Killing vectors exist. It proves (\ref{symmetry_algebra_type_D_2D4}). Substituting
\begin{equation}
e_0 := K_1,\ e_1 := \frac{1}{4 b_{0}} K_2, \ e_2 := K_3, \ e_3 := -2 K_4
\end{equation}
one finds that $[e_{0}, e_i]=0$ for $i=1,2,3$ and 
\begin{equation}
[e_1, e_2]=e_1, \ [e_1, e_3] = -2 e_2, \ [e_2, e_3]=e_3
\end{equation}
Hence, a symmetry algebra is 4-dimensional $A_{3,8} \oplus A_{1}$.

If $b_{0}=0$ but $\Lambda \ne 0$, then still $\chi_{0}=0$ and the substitution
\begin{equation}
e_0 := K_1,\ e_1 :=  K_2, \ e_2 := K_4, \ e_3 := K_3
\end{equation}
leads to the commutation rules $[e_{0}, e_i]=0$ for $i=1,2,3$ and 
\begin{equation}
[e_1, e_2]=0, \ [e_1, e_3] =  e_1, \ [e_2, e_3]=-e_2
\end{equation}
A symmetry algebra is then $A_{3,4} \oplus A_1$. 

However, if $b_{0}=\Lambda=0$ then the fifth symmetry exists and it is generated by a proper homothetic vector. A symmetry algebra is 5-dimensional. Substituting 
\begin{equation}
e_1 := K_2, \ e_2 := K_1, \ e_3 := K_4, \ e_4 := K_3, \ e_5 := \frac{3}{4 \chi_{0}} K_5
\end{equation}
one arrives at the commutation rules
\begin{eqnarray}
&& [e_{1}, e_{2}]=0, \ [e_1, e_3]=0, \ [e_1, e_4]=e_1, \ [e_1, e_5]=0,
\\ \nonumber
&& [e_2, e_3]=0, \ [e_2, e_4] = 0, \ [e_2, e_5]=e_2
\\ \nonumber
&& [e_3, e_4]=-e_3, \ [e_3, e_5]=\frac{1}{2} e_3, \ [e_4, e_5] = 0
\end{eqnarray}
According to Table II of \cite{Patera} these are exactly the commutation rules for the algebra $A_{5,33}^{ab}$ with $a=\frac{1}{2}$ and $b=-1$.
\end{proof}

\begin{Uwaga}
\normalfont
Note that the Killing vector $K_{4} = \partial_{y}$ is null so the metric (\ref{metryka_TypD_ostateczna_pmmmmmpm}) is equipped with a null Killing vector. Moreover, in the case with $\Lambda=b_{0}=0$ the metric (\ref{metryka_TypD_ostateczna_pmmmmmpm}) is quipped with a proper homothetic vector $K_{5} = \frac{2}{3} \chi_{0} (2p \partial_{p} - x \partial_{x} + y \partial_{y}$. Thus, the metric (\ref{metryka_TypD_ostateczna_pmmmmmpm}) with $\Lambda=b_{0}=0$ is an interesting example of a metric which is simultaneously equipped with a proper homothetic and a null Killing vectors. 
\end{Uwaga}

% &&&&&&&&&&&&&&&&&&&&&&&&&&&&&&&&&&&&&&&&&&&&&&&&&&&&&&&&&&&&&&&&&&&&&&&&&&&&&&&&&&&&&&&&&&&&&&&&&

\section{Concluding remarks}
\setcounter{equation}{0}

The current paper is a continuation of the article \cite{Chudecki_Ref_3}. We analyzed here a special class of algebraically degenerate para-Hermite Einstein spaces. The class is characterized by two properties. The first property is that an ASD congruence of null strings is expanding what implies that the spaces analyzed in this paper are not Walker spaces. The second property is that all congruences of null geodesics which are intersections of SD and ASD congruences of null strings are nontwisting. An interesting fact is that the assumption that a single congruence of null geodesics is nontwisting implies that all other congruences of null geodesics are also nontwisting. Equivalently, all 3D-distributions spanned by SD and ASD $\mathcal{C}s$ are integrable (compare Remark \ref{Uwaga_o_calkowalnosci}). We proved that the only pHE-spaces which satisfy these two conditions are spaces for which ASD Weyl tensor is type-[II] or type-[D]. 

Within this class vacuum Einstein field equations with cosmological constant can be integrated, leading to the explicit metrics. Only the most general type $\{ [\textrm{D}]^{ee} \otimes [\textrm{II}]^{e} , [+-,+-]\}$ has not been completely solved although the field equations have been reduced to the Abel equation of the first kind in this case. The results are gathered in the Table \ref{summary}.

\begin{table}[ht]
 \footnotesize
\begin{center}
\begin{tabular}{|c|c|c|}   \hline
 Type  &   Metric   & Functions in the metric        \\ \hline \hline
 $\{ [\textrm{D}]^{ee} \otimes [\textrm{II}]^{e} , [+-,+-]\}$ & (\ref{metryka_TypII_pm_pm_ostateczna}) & 1 function of 2 variables restricted by Eq. (\ref{Typ_II_pm_pm_jedyne_rownanie}), $\Lambda$   \\ \hline
 $\{ [\textrm{D}]^{ee} \otimes [\textrm{II}]^{e} , [+-,--]\}$ & (\ref{metryka_TypII_pm_mm_ostateczna})  & 1 function of 1 variable, $\Lambda$   \\ \hline
  $\{ [\textrm{D}]^{ee} \otimes [\textrm{D}]^{ee} , [+-,+-,+-,+-]\}$ & (\ref{metryka_TypD_ostateczna_pmpmpmpm})  &  2 constants, $\Lambda$  \\ \hline
  $\{ [\textrm{D}]^{ee} \otimes [\textrm{D}]^{ee} , [+-,--,--,+-]\}$ & (\ref{metryka_TypD_ostateczna_pmmmmmpm})  &   1 constant, $\Lambda$ \\ \hline
\end{tabular}
\caption{Summary of main results.}
\label{summary}
\end{center}
\end{table}

The only class of algebraically degenerate pHE-spaces which remains to be solved is a class for which twists of all congruences of null geodesics are nonzero. It means that these congruences must be also expanding. This case is much more complicated and preliminary analysis indicates that within this class general solutions probably cannot be (relatively) easily obtained. Nevertheless, the problem is now analyzed.

\begin{table}[ht]
 \footnotesize
\begin{center}
\begin{tabular}{|l|l|}   \hline
 Type     &  Notes      \\ \hline \hline
 $ [\textrm{D}]^{ee} \otimes [\textrm{I}]$  &  Algebraically general pHE-spaces, no examples are known \\  \hline
 $\{ [\textrm{D}]^{ee} \otimes [\textrm{II}]^{n} , [++,++]\}$   & Algebraically special pHE-spaces, ASD congruences of null   \\
 $\{ [\textrm{D}]^{ee} \otimes [\textrm{II}]^{n} , [++,--]\}$   &  strings are nonexpanding (spaces are Walker spaces). \\
 $\{ [\textrm{D}]^{ee} \otimes [\textrm{D}]^{nn} , [++,++,++,++]\}$   &   All found in  \cite{Chudecki_Ref_3}. \\
 $\{ [\textrm{D}]^{ee} \otimes [\textrm{D}]^{nn} , [++,--,--,++]\}$   &   \\
 $\{ [\textrm{D}]^{ee} \otimes [\textrm{III}]^{n} , [++,++]\}$    &    \\
 $\{ [\textrm{D}]^{ee} \otimes [\textrm{III}]^{n} , [++,--]\}$    &    \\
 $\{ [\textrm{D}]^{ee} \otimes [\textrm{N}]^{n} , [++,++]\}$   &    \\   \hline
 $\{ [\textrm{D}]^{ee} \otimes [\textrm{II}]^{e} , [+-,+-]\}$  &  Algebraically special pHE-spaces, ASD congruences of null \\ 
 $\{ [\textrm{D}]^{ee} \otimes [\textrm{II}]^{e} , [+-,--]\}$   &   strings are expanding (spaces are not Walker spaces),  \\ 
  $\{ [\textrm{D}]^{ee} \otimes [\textrm{D}]^{ee} , [+-,+-,+-,+-]\}$   &   congruences of null geodesics are all nontwisting. \\ 
  $\{ [\textrm{D}]^{ee} \otimes [\textrm{D}]^{ee} , [+-,--,--,+-]\}$   &   All found in the current paper.  \\ \hline
  $\{ [\textrm{D}]^{ee} \otimes [\textrm{II}]^{e} , [++,++]\}$      & Algebraically special pHE-spaces, ASD congruences of null   \\
  $\{ [\textrm{D}]^{ee} \otimes [\textrm{D}]^{ee} , [++,++,++,++]\}$   &  strings are expanding (spaces are not Walker spaces),  \\
  $\{ [\textrm{D}]^{ee} \otimes [\textrm{III}]^{e} , [++,++]\}$      & congruences of null geodesics are all twisting and expanding.   \\
  $\{ [\textrm{D}]^{ee} \otimes [\textrm{N}]^{e} , [++,++]\}$      &  Work in progress.   \\  \hline
\end{tabular}
\caption{Types of pHE-spaces.}
\label{landscape}
\end{center}
\end{table}

We will close this paper with a brief overview of the types of pHE-spaces. They are gathered in the Table \ref{landscape}. PHE-spaces can be divided into four disjoint families, two of them are solved, two of them are not.

\end{document}